\newcommand*{\QEDB}{\hfill\ensuremath{\square}}
\newcommand{\Figwidth}{\columnwidth}%
\def\twocolbreak{\nonumber\\ &}%
\newcommand{\Figwidth}{3.5in}%
\def\twocolbreak{}%
\begin{document}

\title{Lattice Coding and Decoding for Multiple-Antenna Ergodic Fading Channels}
\author{\IEEEauthorblockN{Ahmed Hindy,~\IEEEmembership{Student member,~IEEE} and 
Aria Nosratinia,~\IEEEmembership{Fellow,~IEEE}}
\thanks{The authors are with the department of Electrical Engineering, University of Texas at Dallas, Email: ahmed.hindy@utdallas.edu and aria@utdallas.edu}
\thanks{This work was supported in part by the grant 1546969 from the National Science Foundation.}
}

\maketitle




\newtheorem{theorem}{Theorem}
\newtheorem{lemma}{Lemma}
\newtheorem{remark}{Remark}
\newtheorem{corollary}{Corollary}

\def\Px{P_x}
\def\Ph{\sigma_h^2}
\def\SNR{\rho}
\def\Ix{\boldsymbol{I}}
\def\hv{\boldsymbol{h}}
\def\xv{\boldsymbol{x}}
\def\yv{\boldsymbol{y}}
\def\wv{\boldsymbol{w}}
\def\ev{\boldsymbol{e}}
\def\tv{\boldsymbol{t}}
\def\dv{\boldsymbol{d}}
\def\zv{\boldsymbol{z}}
\def\vv{\boldsymbol{v}}
\def\uv{\boldsymbol{u}}
\def\av{\boldsymbol{a}}
\def\gv{\boldsymbol{g}}
\def\latpoint{\boldsymbol{\lambda}}
\def\SNRm{\boldsymbol{\Psi}}
\def\Hm{\boldsymbol{H}}
\def\Am{\boldsymbol{A}}
\def\Bm{\boldsymbol{B}}
\def\Um{\boldsymbol{U}}
\def\Dm{\boldsymbol{D}}
\def\Vm{\boldsymbol{V}}
\def\Gm{\boldsymbol{G}}
\def\Fm{\boldsymbol{F}}
\def\comp{\mathbb{C}}
\def\intp{\mathbb{Z}^+}
\def\real{\mathbb{R}}
\def\Hs{\mathcal{H}}
\def\Cs{\mathbb{L}}
\def\lat{\Lambda}
\def\voronoi{\mathcal{V}}
\def\genvoronoi{\Omega_1}
\def\codebook{\mathcal{L}_1}
\def\Ex{\mathbb{E}}
\def\vol{\text{Vol}}
\def\gap{\mathcal{G}}
\def\ball{\mathcal{B}}
\def\cov{\Sigma}
\def\covx{\boldsymbol{K_x}}
\def\Sset{\mathcal{S}}
\def\prob{\mathbb{P}}


\makeatletter%
\if@twocolumn%
\else
\vspace*{-0.4in}
\fi%
\makeatother%

\begin{abstract}

For ergodic fading, a lattice coding and decoding strategy is proposed and its performance is analyzed for the single-input single-output (SISO) and multiple-input multiple-output (MIMO) point-to-point channel as well as the multiple-access channel (MAC), with channel state information available only at the receiver (CSIR). At the decoder a novel strategy is proposed consisting of a time-varying equalization matrix followed by decision regions that depend only on channel statistics, not individual realizations. Our encoder has a similar structure to that of Erez and Zamir. For the SISO channel, the gap to capacity is bounded by a constant  under a wide range of fading distributions. For the MIMO channel under Rayleigh fading, the rate achieved is within a gap to capacity that does not depend on the signal-to-noise ratio (SNR), and diminishes with the number of receive antennas. The analysis is extended to the $K$-user MAC where similar results hold. Achieving a small gap to capacity while limiting the use of CSIR to the equalizer highlights the scope for efficient decoder implementations, since decision regions are fixed, i.e., independent of channel realizations. 
\end{abstract}

\begin{IEEEkeywords}
Ergodic capacity, ergodic fading, lattice codes, MIMO, multiple-access channel.
\end{IEEEkeywords}


\section{Introduction}
\label{intro}

In practical applications, structured codes are favored due to computational complexity issues; lattice codes are an important class of structured codes that has gained special interest in the last few decades.
An early attempt to characterize the performance of lattice codes in the additive white Gaussian noise (AWGN) channel was made by de~Buda~\cite{debuda_capacity}; a result that was later corrected by Linder \textit{et al.}~\cite{Linder}. Subsequently, Loeliger~\cite{Loeliger} showed the achievability of $\frac{1}{2} \log(\text{SNR})$ with lattice coding and decoding. Urbanke and Rimoldi~\cite{Urbanke_lattices} showed the achievability of $\frac{1}{2} \log(1+\text{SNR})$ with maximum-likelihood decoding.
Erez and Zamir~\cite{Erez_Zamir} demonstrated that lattice coding and decoding achieve the capacity of the AWGN channel using a method involving common randomness via a dither variable and minimum mean-square error (MMSE) scaling at the receiver. 
Subsequently, Erez {\em et al.}~\cite{Lattices_good} proved the existence of lattices with good properties that achieve the performance promised in~\cite{Erez_Zamir}. El~Gamal \textit{et al.}~\cite{DMT_lattices} showed that lattice codes achieve the capacity of the AWGN MIMO channel, as well as the optimal diversity-multiplexing tradeoff under quasi-static fading. Prasad and Varanasi~\cite{Varanasi1} developed lattice-based methods to approach the diversity of the MIMO channel with low complexity. Dayal and Varanasi~\cite{Varanasi2} developed diversity-optimal codes for Rayleigh fading channels using finite-constellation integer lattices and maximum-likelihood decoding. Zhan \textit{et al.}~\cite{IF_RX} introduced {\em integer-forcing linear receivers} as an efficient decoding approach that exploits the linearity of lattice codebooks. Ordentlich and Erez~\cite{IF_Erez} showed that in conjunction with precoding, integer-forcing can operate within a constant gap to the MIMO channel capacity. 
Going beyond the point-to-point channel, Song and Devroye~\cite{Relays_lattices_Devroye} investigated the performance of lattice codes in the Gaussian relay channel.
Nazer and Gastpar~\cite{CF_Nazer} introduced the compute-and-forward relaying strategy based on the decoding of integer combinations of interfering lattice codewords from multiple transmitters. Compute-and-forward was also an inspiration for the development of integer-forcing~\cite{IF_RX}.
\"{O}zg\"{u}r and Diggavi~\cite{Relays_lattices_Diggavi} showed that lattice codes can operate within a constant gap to the capacity of Gaussian relay networks. 
Ordentlich \textit{et al.}~\cite{CF_MAC} proposed lattice-based schemes that operate within a constant gap to the sum capacity of the~$K$-user MAC, and the sum capacity of a class of $K$-user symmetric Gaussian interference channels. 
On the other hand, a brief outline of related results on ergodic capacity is as follows. The ergodic capacity of the Gaussian fading channel was established by McEliece and Stark~\cite{stark_paper}. The capacity of the ergodic MIMO channel was established by Telatar~\cite{telatar} and Foschini and Gans~\cite{foschini}. 
The capacity region of the ergodic MIMO MAC was found by Shamai and Wyner~\cite{MAC_Shamai}. The interested reader is also referred to the surveys on fading channels by Biglieri \textit{et al.}~\cite{fading_Shamai} and Goldsmith \textit{et al.}~\cite{capacity_MIMO}.

For the most part, lattice coding results so far have addressed channel coefficients that are either constant or quasi-static. 
Vituri~\cite{dispersion} studied the performance of lattice codes with unbounded power constraint under regular fading channels. 
Recently, Luzzi and Vehkalahti~\cite{Luzzi_j} showed that a class of lattices belonging to a family of division algebra codes achieve rates within a constant gap to the ergodic capacity at all SNR, where the gap depends on the algebraic properties of the code as well as the antenna configuration. Unfortunately, the constant gap in~\cite{Luzzi_j} can be shown to be quite large at many useful antenna configurations, in addition to requiring substantial transmit power to guarantee any positive rate. Liu and Ling~\cite{polar_lattices} showed that polar lattices achieve the capacity of the i.i.d. SISO fading channel. Campello \textit{et al.}~\cite{Belfiore} also proved that algebraic lattices achieve the ergodic capacity of the SISO fading channel.

In this paper we propose a lattice coding and decoding strategy and analyze its performance for a variety of MIMO ergodic channels, showing that the gap to capacity is small at both high and low SNR.
The fading processes in this paper are finite-variance stationary and ergodic. First, we present a lattice coding scheme for the MIMO point-to-point channel under isotropic fading, whose main components include the class of nested lattice codes proposed in~\cite{Erez_Zamir} in conjunction with a time-varying MMSE matrix at the receiver. The proposed decision regions are spherical and depend only on the channel distribution, and hence the decision regions remain unchanged throughout subsequent codeword transmissions.%
\footnote{Although the decision regions are designed independently of the channel realizations, the received signal is multiplied by an MMSE matrix prior to decoding the signal, and hence channel knowledge at the receiver remains necessary for the results in this paper.} 
The relation of the proposed decoder with Euclidean lattice decoding is also discussed. The rates achieved are within a constant gap to the ergodic capacity for a broad class of fading distributions. Under Rayleigh fading, a bound on the gap to capacity is explicitly characterized which vanishes as the number of receive antennas grows. Similar results are also derived for the fading $K$-user MIMO MAC. 
The proposed scheme provides useful insights on the implementation of MIMO systems under ergodic fading. First, the results reveal that structured codes can achieve rates within a small gap to capacity. Moreover, channel-independent decision regions approach optimality when the number of receive antennas is large. Furthermore, for the special case of SISO channels the  gap to capacity is characterized for all SNR values and over a wide range of fading distributions. Unlike~\cite{Luzzi_j}, the proposed scheme achieves positive rates at low SNR where the gap to capacity vanishes. At moderate and high SNR, the gap to capacity is bounded by a constant that is independent of SNR and only depends on the fading distribution. In the SISO channel under Rayleigh fading, the  gap is a diminishing fraction of the capacity as the SNR increases.\footnote{Earlier versions of the SISO and MIMO point-to-point results of this paper appeared in~\protect\cite{paper1,paper3}; these results are improved in the current paper in addition to producing extensions to MIMO MAC.}

Throughout the paper we use the following notation. Boldface uppercase and lowercase letters denote matrices and column vectors, respectively. The set of real and complex numbers are denoted $\real, \comp$. $\boldsymbol{A}^T, \boldsymbol{A}^H$ denote the transpose and Hermitian transpose of matrix~$\boldsymbol{A}$, respectively. $a_i$ denotes element~$i$ of~$\boldsymbol{a}$. $\boldsymbol{A} \succeq \Bm$ indicates that $\boldsymbol{A}-\Bm$ is positive semi-definite. $\det(\Am)$ and $\text{tr}(\Am)$ denote the determinant and trace of $\Am$, respectively. $\prob,\Ex$ denote the probability and expectation operators, respectively. $\ball_n(q)$ is an $n$-dimensional sphere of radius~$q$ and the volume of an arbitrary shape $\mathcal{A}$ is $\vol(\mathcal{A})$.  All logarithms are in base~2. 


\section{Overview of Lattice Coding}
\label{sec:lattice}

A lattice $\lat$ is a discrete subgroup of $\real^n$ which is closed under reflection and real addition. 
The fundamental Voronoi region~$\voronoi$ of the lattice~$\lat$ is defined by
\begin{equation}
\voronoi = \big \{ \boldsymbol{s}: \text{arg} \min_{\latpoint \in \lat} || \boldsymbol{s} - \latpoint ||  = \boldsymbol{0} \big \}. 
\label{fund_voronoi}
\end{equation}
 
The {\em second moment per dimension} of  $\lat$ is defined as 
\begin{equation}
\sigma_{\lat}^2 = \frac{1}{n \vol(\voronoi)} \int_{\voronoi} ||\boldsymbol{s}||^2 d \boldsymbol{s},
\label{moment}
\end{equation}
and the {\em normalized second moment} $\mathit{G}(\lat)$ of $\lat$ is 
\begin{equation}
\mathit{G}(\lat) = \frac{\sigma_{\lat}^2}{\vol^{\frac{2}{n}}(\voronoi)},
\label{normalized_moment}
\end{equation}
where $\mathit{G}(\lat) > \frac{1}{2 \pi e}$ for any lattice in $\real^n$. 
Every $\boldsymbol{s} \in \real^n$ can be uniquely written as $\boldsymbol{s}= \latpoint + \ev$ where $\latpoint \in \lat$, $\ev \in \voronoi$. The quantizer is then defined by
\begin{equation}
Q_{\voronoi}(\boldsymbol{s})= \latpoint \;, \quad \text{if } \boldsymbol{s} \in \latpoint+\voronoi. 
\label{quantizer}
\end{equation}
Define the modulo-$\lat$ operation corresponding to $\voronoi$ as follows
\begin{equation}
[\boldsymbol{s}] \, \text{mod} \lat \triangleq \boldsymbol{s}- Q_{\voronoi}(\boldsymbol{s}).
\label{mod}
\end{equation}
The mod $\lat$ operation also satisfies
\begin{equation}
\big[ \boldsymbol{s} + \boldsymbol{t} \big] \, \text{mod} \lat = 
\big[ \boldsymbol{s} + [\boldsymbol{t}] \, \text{mod} \lat \big] \, \text{mod} \lat
\hspace{5mm} \forall \boldsymbol{s},\boldsymbol{t} \in \real^n.
\label{mod_mod}
\end{equation}
The lattice $\lat$ is said to be nested in~$\lat_1$ if~$\lat \subseteq \lat_1$. 
We employ the class of nested lattice codes proposed in \cite{Erez_Zamir}. The transmitter constructs a codebook $\codebook = \lat_1 \cap \voronoi$, whose rate
is given by
\begin{equation}
R=\frac{1}{n} \log \frac{\vol(\voronoi)}{\vol(\voronoi_1) } \,.
\label{lattice_rate}
\end{equation}

The coarse lattice~$\lat$ has an arbitrary second moment~$\Px$ and is good for covering and quantization, and the fine lattice~$\lat_1$ is good for AWGN coding, where both are construction-$A$ lattices~\cite{Loeliger,Erez_Zamir}. The existence of such lattices has been proven in~\cite{Lattices_good}.
A lattice $\lat$ is good for covering if 
\begin{equation}
\lim_{n \to \infty} \frac{1}{n} \log \frac{\vol(\ball_n(R_c))}{\vol(\ball_n(R_f))}=0,
\label{covering}
\end{equation}
where the covering radius $R_c$ is the radius of the smallest sphere spanning $\voronoi$ and ${R_f}$ is the radius of the sphere whose volume is equal to $\vol(\voronoi)$. In other words, for a good nested lattice code with second moment $\Px$, the Voronoi region~$\voronoi$ approaches a sphere of radius $\sqrt{n \Px}$. A lattice~$\lat$ is good for quantization if
\begin{equation}
\lim_{n \to \infty} \mathit{G}(\lat) = \frac{1}{2 \pi e} \, .
\label{quantization}
\end{equation}

A key ingredient of the lattice coding scheme proposed in \cite{Erez_Zamir} is using common randomness (dither)~$\dv$ in conjunction with the lattice code at the transmitter. $\dv$ is also known at the receiver, and is drawn uniformly over $\voronoi$.

\begin{lemma}\cite[Lemma~1]{Erez_Zamir}
If $\tv \in \voronoi$ is independent of $\dv$, then $\xv$ is uniformly distributed over $\voronoi$ 
and independent of the lattice point~$\tv$.
\label{lemma:uniform}
\end{lemma}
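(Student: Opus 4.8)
The statement is the well-known \emph{Crypto Lemma}. Here $\xv$ denotes the dither-reduced transmit signal $\xv = [\tv + \dv]\,\text{mod}\,\lat$, with $\dv$ uniform over $\voronoi$ and independent of $\tv$ (the sign of the dither being immaterial, since $\lat$ is closed under reflection). The plan is to condition on a realization $\tv = \boldsymbol{t}_0 \in \voronoi$ and show that, regardless of the value $\boldsymbol{t}_0$, the conditional law of $\xv$ is uniform over $\voronoi$. This single computation yields both conclusions at once: uniformity of $\xv$ (its marginal is the common conditional law) and independence of $\xv$ from $\tv$ (the conditional law does not depend on the conditioning value).

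The core of the argument is to study, for fixed $\boldsymbol{t}_0$, the map $\phi_{\boldsymbol{t}_0}\colon \voronoi \to \voronoi$ defined by $\phi_{\boldsymbol{t}_0}(\dv) = [\boldsymbol{t}_0 + \dv]\,\text{mod}\,\lat$, and to establish that it is a measure-preserving bijection of $\voronoi$ onto itself. First I would observe that the translate $\boldsymbol{t}_0 + \voronoi$ is again a fundamental domain of $\lat$ (a rigid shift of the Voronoi cell tiles space just as $\voronoi$ does), so reducing modulo $\lat$, via \eqref{mod}, carries $\boldsymbol{t}_0 + \voronoi$ one-to-one onto $\voronoi$; equivalently, $\phi_{\boldsymbol{t}_0}$ is a bijection of $\voronoi$. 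Next I would note that on each subregion of $\voronoi$ where the quantizer output $Q_{\voronoi}(\boldsymbol{t}_0 + \dv)$ in \eqref{quantizer} is a constant lattice point $\latpoint$, the map reduces to $\dv \mapsto \boldsymbol{t}_0 + \dv - \latpoint$, a pure translation. Since translations preserve Lebesgue measure and these finitely (countably) many cells partition $\voronoi$, the map $\phi_{\boldsymbol{t}_0}$ is piecewise-isometric and hence volume-preserving.

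With this in hand the conclusion is immediate: pushing the uniform density $1/\vol(\voronoi)$ on $\voronoi$ through the measure-preserving bijection $\phi_{\boldsymbol{t}_0}$ returns the uniform density $1/\vol(\voronoi)$ on $\voronoi$. Thus the conditional law of $\xv$ given $\tv = \boldsymbol{t}_0$ is uniform over $\voronoi$ for \emph{every} $\boldsymbol{t}_0$, so it is independent of $\boldsymbol{t}_0$; by the definition of conditional independence this gives $\xv$ independent of $\tv$, with uniform marginal. The consistency condition \eqref{mod_mod} is what guarantees this reduction behaves correctly under the group structure and underlies the bijectivity claim.

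The main obstacle is the rigorous justification of the measure-preserving-bijection property: one must handle the piecewise translation structure carefully and verify that the cell boundaries, where $Q_{\voronoi}$ is ambiguous, form a Lebesgue-null set that does not affect the density computation. A clean alternative that sidesteps the boundary bookkeeping is the group-theoretic viewpoint: the quotient $\real^n/\lat$ is a compact abelian group, $\voronoi$ is a set of coset representatives, the reduction $[\,\cdot\,]\,\text{mod}\,\lat$ is the quotient homomorphism, and $\dv$ uniform over $\voronoi$ corresponds to the normalized Haar measure. The transmit signal is then the group sum of the image of $\tv$ with a Haar-distributed element, and convolving any distribution with Haar measure on a compact group returns Haar measure, independently of the other operand. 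Either route delivers the claim, but I expect the measure-null boundary handling to be the one delicate point to state precisely.
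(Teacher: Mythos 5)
Your proposal is correct: it is the standard proof of the Crypto Lemma, and it matches the argument in the source the paper relies on — the paper itself gives no proof, citing \cite{Erez_Zamir} (Lemma~1), where the result is established precisely by your route of showing that, conditioned on any $\tv$, the piecewise-translation map $\dv \mapsto [\tv - \dv]\,\text{mod}\lat$ (the sign indeed being immaterial by the symmetry of $\voronoi$) is a measure-preserving bijection of $\voronoi$, so the conditional law of $\xv$ is uniform and independent of $\tv$. The only point to keep tidy is the one you already flagged: the quantizer tie-breaking on cell boundaries is a Lebesgue-null set, so it does not affect the pushforward of the uniform density.
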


\begin{lemma}\cite[Theorem 1]{Lattice_quantization}.
An optimal lattice quantizer with second moment~$\sigma_{\lat}^2$ is white, and the autocorrelation of its dither~$\dv_{\text{opt}}$ is given by $\Ex [ \dv_{\text{opt}} \dv_{\text{opt}}^T ]= \sigma_{\lat}^2  \Ix_n$.
\label{lemma:x_distribution}
\end{lemma}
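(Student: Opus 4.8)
The object to control is the autocorrelation matrix $\boldsymbol{M} \triangleq \Ex[\dv_{\text{opt}} \dv_{\text{opt}}^T]$ of the quantization noise $\dv_{\text{opt}}$, which is uniform over $\voronoi$. By the definition in~\eqref{moment}, $\frac{1}{n}\text{tr}(\boldsymbol{M}) = \sigma_{\lat}^2$, so that $\mathit{G}(\lat) = \frac{1}{n}\text{tr}(\boldsymbol{M})\,/\,\vol^{2/n}(\voronoi)$; the claim is precisely that optimality of $\lat$ (i.e.\ minimality of $\mathit{G}$) forces $\boldsymbol{M}$ to be isotropic. The plan is to perturb the optimal lattice by an arbitrary nonsingular linear map $\boldsymbol{T}$ and exploit the fact that none of these perturbations can lower $\mathit{G}$.

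First I would bound the second moment of the perturbed lattice $\boldsymbol{T}\lat$, whose Voronoi volume is $|\det\boldsymbol{T}|\,\vol(\voronoi)$. The delicate point is that the Euclidean Voronoi region is \emph{not} carried to $\boldsymbol{T}\voronoi$ under $\boldsymbol{T}$, since a general linear map distorts Euclidean distances; consequently one cannot simply identify the noise of $\boldsymbol{T}\lat$ with $\boldsymbol{T}\dv_{\text{opt}}$. I would sidestep this by using $\boldsymbol{T}\dv_{\text{opt}}$ only as a \emph{suboptimal} quantizer for $\boldsymbol{T}\lat$: the map $\boldsymbol{s}\mapsto \boldsymbol{T}\,Q_{\voronoi}(\boldsymbol{T}^{-1}\boldsymbol{s})$ is a valid (if not nearest-neighbor) lattice quantizer whose noise is $\boldsymbol{T}\dv_{\text{opt}}$, so the genuine nearest-neighbor second moment obeys the one-sided bound $\sigma_{\boldsymbol{T}\lat}^2 \le \frac{1}{n}\text{tr}(\boldsymbol{T}\boldsymbol{M}\boldsymbol{T}^T)$. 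This inequality, which needs only the trivial direction of the comparison, is what makes the geometric subtlety harmless.

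Combining the bound with optimality $\mathit{G}(\lat) \le \mathit{G}(\boldsymbol{T}\lat)$ and cancelling the common factor $\vol^{2/n}(\voronoi)$ yields, for every nonsingular $\boldsymbol{T}$,
\[
\text{tr}(\boldsymbol{M}) \;\le\; \frac{\text{tr}(\boldsymbol{T}\boldsymbol{M}\boldsymbol{T}^T)}{|\det\boldsymbol{T}|^{2/n}}.
\]
Writing $\boldsymbol{N} \triangleq \boldsymbol{T}\boldsymbol{M}\boldsymbol{T}^T$ and using $|\det\boldsymbol{T}|^{2} = \det\boldsymbol{N}/\det\boldsymbol{M}$, the right-hand side equals $(\det\boldsymbol{M})^{1/n}\,\text{tr}(\boldsymbol{N})/(\det\boldsymbol{N})^{1/n}$. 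As $\boldsymbol{T}$ ranges over all nonsingular matrices, $\boldsymbol{N}$ ranges over all symmetric positive-definite matrices, and the arithmetic--geometric mean inequality gives $\text{tr}(\boldsymbol{N})/(\det\boldsymbol{N})^{1/n} \ge n$, with equality exactly when $\boldsymbol{N}\propto\Ix_n$. Minimizing the right-hand side therefore forces $\text{tr}(\boldsymbol{M}) \le n(\det\boldsymbol{M})^{1/n}$. But the same inequality applied to $\boldsymbol{M}$ gives the reverse, $\text{tr}(\boldsymbol{M}) \ge n(\det\boldsymbol{M})^{1/n}$, with equality only when $\boldsymbol{M}$ is a scalar multiple of $\Ix_n$. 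Hence $\boldsymbol{M}=c\,\Ix_n$, and taking the trace fixes $c=\sigma_{\lat}^2$, giving $\Ex[\dv_{\text{opt}}\dv_{\text{opt}}^T]=\sigma_{\lat}^2\Ix_n$. The one genuine obstacle is the Voronoi-distortion issue in the second step; everything else is the AM--GM extremal characterization of the scaled identity.
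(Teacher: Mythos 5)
Your proof is correct, but note that the paper itself contains no proof of this lemma: it is imported by citation as Theorem~1 of~\cite{Lattice_quantization}, so the only meaningful comparison is with that reference. Your argument is essentially the classical one given there: perturb the optimal lattice by a nonsingular $\boldsymbol{T}$, bound the second moment of $\boldsymbol{T}\lat$ by that of the suboptimal (non-nearest-neighbor) quantizer $\boldsymbol{s}\mapsto \boldsymbol{T}Q_{\voronoi}(\boldsymbol{T}^{-1}\boldsymbol{s})$ --- your handling of the Voronoi-distortion point is exactly the right way to make this step one-sided and harmless --- and then invoke the AM--GM trace--determinant characterization of scalar matrices. Two details deserve to be made explicit. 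First, the inequality $\mathit{G}(\lat)\le \mathit{G}(\boldsymbol{T}\lat)$ presumes that ``optimal lattice quantizer'' means minimizing $\mathit{G}$ over all $n$-dimensional lattices, so that every $\boldsymbol{T}\lat$ lies in the comparison class; this is indeed the intended meaning. Second, rather than arguing about the full range of $\boldsymbol{N}=\boldsymbol{T}\boldsymbol{M}\boldsymbol{T}^T$ and an infimum, it is cleaner to substitute the single choice $\boldsymbol{T}=\boldsymbol{M}^{-1/2}$ (legitimate because $\boldsymbol{M}\succ 0$, since $\voronoi$ has nonempty interior), which yields $\text{tr}(\boldsymbol{M})\le n(\det\boldsymbol{M})^{1/n}$ directly; AM--GM on the eigenvalues of $\boldsymbol{M}$ then gives the reverse inequality, with equality only for scalar matrices, forcing $\boldsymbol{M}=\sigma_{\lat}^2\Ix_n$. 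These are presentation points only; the proof as written is sound.
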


Note that the optimal lattice quantizer is a lattice quantizer with the minimum~$\mathit{G}(\lat)$. Since the proposed class of lattices is good for quantization, the autocorrelation of $\dv$ approaches that of~$\dv_{\text{opt}}$ as~$n$ increases. 
For a more comprehensive review on lattice codes see~\cite{lattice}.


\section{Point-to-point channel} 
\label{sec:ptp}

\subsection{MIMO channel}
\label{sec:ptp_MIMO}

Consider a MIMO point-to-point channel with $N_t$ transmit antennas and $N_r$ receive antennas. The received signal at time instant~$i$ is given by 
\begin{equation}
\yv_i=\Hm_i \xv_i+\wv_i,
\label{sig_Rx_MIMO}
\end{equation}
where $\Hm_i$ is an $N_r \times N_t$ matrix denoting the channel coefficients at time~$i$. The channel is zero-mean with strict-sense stationary and ergodic time-varying gain. Moreover,~$\Hm$ is isotropically distributed, i.e., $\prob (\Hm)= \prob (\Hm \Vm)$ for any unitary matrix $\Vm$ independent of $\Hm$. We first consider real-valued channels; the extension to complex-valued channels will appear later in this section.
 The receiver has instantaneous channel knowledge, whereas the transmitter only knows the channel distribution. $\xv_i \in \real^{N_t}$ is the transmitted vector at time~$i$, where the codeword
\begin{equation}
\xv \triangleq [\xv_1^T , \xv_2^T , \ldots , \xv_n^T]^T
\label{codeword_MIMO}
\end{equation} 
is transmitted throughout~$n$ channel uses and satisfies $\Ex [ || \xv ||^2 ] \leq n \Px$. 
The noise~$\wv \in \real^{N_r n}$ defined by $\wv^T \triangleq [\wv_1^T , \wv_2^T , \ldots , \wv_n^T]^T$ is a zero-mean i.i.d. Gaussian noise vector with covariance $ \Ix_{N_r n} $, and is independent of the channel realizations. For convenience, we define the SNR per transmit antenna to be $\SNR \triangleq \Px/ N_t$.


\begin{theorem}
For the ergodic fading MIMO channel with isotropic fading, any rate~$R$ satisfying 
\begin{equation}
R < -\frac{1}{2} \log \det \Big( \Ex \big[ (\Ix_{N_t} + \SNR \Hm^T \Hm)^{-1} \big] \Big) 
\label{rate_MIMO}
\end{equation}
is achievable using lattice coding and decoding.
\label{theorem:rate_MIMO}
\end{theorem}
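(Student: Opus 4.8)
The plan is to transport the nested-lattice construction of Erez and Zamir to the $N_t n$-dimensional block channel, replacing their scalar MMSE coefficient by a \emph{time-varying} MMSE matrix at the front end, and then to argue that a single channel-independent (spherical) decision rule still decodes reliably. I would work with a pair of nested lattices $\lat \subseteq \lat_1$ in $\real^{N_t n}$, taking $\lat$ good for covering and quantization with second moment $\sigma_{\lat}^2 = \SNR$ (so that $\Ex[\|\xv\|^2] \le n\Px$ holds) and $\lat_1$ good for AWGN coding. The message indexes a point $\tv \in \codebook$, and with a dither $\dv$ uniform over $\voronoi$ the transmitter sends $\xv = [\tv + \dv]\bmod\lat$. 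By Lemma~\ref{lemma:uniform}, $\xv$ is uniform over $\voronoi$ and independent of $\tv$, and by Lemma~\ref{lemma:x_distribution} the per-time blocks $\xv_i$ have covariance $\SNR\,\Ix_{N_t}$ in the large-$n$ limit.

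For the front end I would multiply each received block by the MMSE matrix $\Um_i = \SNR\,\Hm_i^T(\SNR\,\Hm_i\Hm_i^T + \Ix_{N_r})^{-1}$, subtract the dither, and reduce modulo $\lat$. Setting $\Am_i \triangleq (\Ix_{N_t} + \SNR\,\Hm_i^T\Hm_i)^{-1}$, the push-through identity gives $\Um_i\Hm_i = \Ix_{N_t} - \Am_i$, so the processed signal equals $[\tv + \ev]\bmod\lat$ with effective-noise block $\ev_i = -\Am_i\xv_i + \Um_i\wv_i$; this is exactly the linear-MMSE estimation error, whose covariance conditioned on $\Hm_i$ is $\SNR\,\Am_i$. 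Since $\ev$ is a function of $\xv$ and $\wv$ only, the crypto lemma (Lemma~\ref{lemma:uniform}) renders it independent of the message $\tv$, which is what makes the lattice error analysis tractable.

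The crux is the decision rule. Isotropy of $\Hm$ forces $\Ex[\Am_i] = \Ex[(\Ix_{N_t} + \SNR\,\Hm^T\Hm)^{-1}]$ to commute with every unitary matrix, hence $\Ex[\Am] = c\,\Ix_{N_t}$ for a scalar $c$; consequently Euclidean (spherical) lattice decoding, using decision regions keyed only to the \emph{average} statistics and not to the realizations $\Hm_i$, is the natural matched rule. For this decoder I would track the per-dimension power of $\ev$, namely $\tfrac{1}{N_t n}\Ex\|\ev\|^2 \to \SNR\,c$, and invoke the AWGN-goodness of $\lat_1$ to conclude that the error probability vanishes as long as $\SNR\,c$ stays below the Poltyrev threshold $\vol(\voronoi_1)^{2/(N_t n)}/(2\pi e)$. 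Combining this with $\vol(\voronoi)^{2/(N_t n)} \to 2\pi e\,\SNR$ (from $\mathit{G}(\lat)\to \tfrac{1}{2\pi e}$) and the rate identity $R = \tfrac1n\log(\vol(\voronoi)/\vol(\voronoi_1))$, the factors $2\pi e$ and $\SNR$ cancel and leave $R < -\tfrac{N_t}{2}\log c = -\tfrac12\log\det(\Ex[\Am])$, which is the claimed bound.

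I expect the main obstacle to be the very last inference about reliable decoding, because the effective noise is neither Gaussian---the self-interference term $-\Am_i\xv_i$ is a linear image of a vector uniform over $\voronoi$---nor does it have a constant per-realization covariance, yet it must be handled by a decoder tuned only to the averaged covariance $\SNR\,\Ex[\Am]$. Making the threshold argument rigorous therefore requires showing that $\ev$ is semi norm-ergodic, i.e. that its empirical squared norm concentrates at $\SNR\,c\,N_t n$; this is precisely where one must combine the ergodic theorem applied to $\tfrac1n\sum_i \Am_i \to \Ex[\Am]$ with the concentration of a dithered codeword on the covering sphere of a good lattice. The same averaging step is what produces the gap to the true ergodic capacity $\tfrac12\Ex[\log\det(\Ix_{N_t}+\SNR\,\Hm^T\Hm)]$: replacing the realization-dependent $\Am_i$ by its mean before taking the determinant is a Jensen step, and the concavity of $\log\det$ guarantees the achieved rate does not exceed capacity.
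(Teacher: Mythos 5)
Your proposal follows essentially the same route as the paper: the Erez--Zamir nested-lattice encoder with dither, the block-diagonal time-varying MMSE front end producing $\zv_i = -(\Ix_{N_t}+\SNR\Hm_i^T\Hm_i)^{-1}\xv_i + \Um_i^T\wv_i$, the isotropy argument that makes $\Ex[(\Ix_{N_t}+\SNR\Hm^T\Hm)^{-1}]$ a scaled identity so that a single channel-independent spherical decision rule is matched to the averaged statistics, and the same volume/rate bookkeeping leading to \eqref{rate_MIMO}. Two points of divergence are worth making explicit, because one of them is a genuine (though fillable) gap.

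The reliability step as you state it does not go through: classical AWGN-goodness of $\lat_1$ is a statement about \emph{Gaussian} noise below the Poltyrev threshold, while the effective noise here is a linear image of the codeword (uniform over $\voronoi$) plus a matrix-scaled Gaussian, with realization-dependent conditional covariance. Your own caveat about semi norm-ergodicity is aimed correctly, but establishing that $\zv$ concentrates on a sphere is only half of what is needed; one also needs the fine lattice to be good against \emph{semi norm-ergodic} (not merely Gaussian) noise, which is an existence statement over the construction-A ensemble, not a consequence of AWGN-goodness of a fixed lattice sequence. The paper gets both halves at once by using Loeliger's ambiguity decoder with the spherical region $\genvoronoi$: averaged over the ensemble, the error probability is bounded by $\prob(\zv\notin\genvoronoi)$ plus a volume ratio --- a bound valid for arbitrary noise confined to a bounded region --- and Lemma~\ref{lemma:lattice_decoder} then shows that Euclidean lattice decoding can only do better. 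If you instead make your threshold argument rigorous in the Ordentlich--Erez sense, you will have essentially reconstructed this ambiguity-decoder bound.

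Second, the concentration you defer as ``the main obstacle'' is exactly the paper's Lemma~\ref{lemma:z_dist}, and its proof needs more than the two ingredients you name. Beyond the ergodic theorem for $\Am_i$ and the concentration of $\|\xv\|$ near the covering radius, one must control the quadratic form $\xv^T\Am_s^T\Am_s\xv$ with a \emph{weighted} law of large numbers (the weights are the non-i.i.d. codeword entries), bound the cross term $\xv^T\Am_s^T\Bm_s\wv$ separately, and deal with the fact that $\Ex[\xv\xv^T]$ is only asymptotically white at finite $n$ --- the paper handles this with a minimum-eigenvalue bound and by passing to an inflated noise $\zv^*$ whose Gaussian component is scaled up to the covering radius. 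Your sketch points at the right ingredients, but this step conceals most of the actual technical work of the proof.
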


\begin{proof}
{\em Encoding:} Nested lattice codes are used where $\lat \subseteq \lat_1$. The transmitter emits a lattice point~$\tv \in \lat_1$ that is dithered with~$\dv$ which is drawn uniformly over~$\voronoi$. $\lat$ has a second moment~$\Px$ and is good for covering and quantization, and $\lat_1$ is good for AWGN coding, where both are construction-$A$ lattices~\cite{Loeliger,Erez_Zamir}. The dithered codeword is then as follows
\begin{equation}
\xv = \, \big[\tv - \dv \big] \, \text{mod} \lat \, 
= \, \tv - \dv + \latpoint \, ,
\label{sig_tx}
\end{equation}
where $\latpoint= - Q_{\voronoi}(\tv - \dv ) \in \lat$ from \eqref{mod}. The coarse lattice $\lat \in \real^{N_t n}$ has a second moment~$\SNR$. The codeword is composed of $n$~vectors $\xv_i$ each of length~$N_t$ as shown in \eqref{codeword_MIMO}, which are transmitted throughout the $n$ channel uses.

{\em Decoding:} The received signal can be expressed in the form $\yv=\Hm_s \xv+\wv$, where  $\Hm_s$ is a block-diagonal matrix whose diagonal block~$i$ is $\Hm_i$. The received signal $\yv$ is multiplied by a matrix $\Um_s \in \real^{N_r n \times N_t n}$ and the dither is removed as follows
\begin{align}
\yv' \triangleq & \Um_s^T \yv + \dv \nonumber \\
 = & \xv + (\Um_s^T \Hm_s - \Ix_{N_t n}) \xv + \Um_s^T \wv + \dv \nonumber \\ 
 = & \tv + \latpoint + \zv,
\label{sig_rx_2}
\end{align}
where 
\begin{equation}
\zv \triangleq ( \Um_s^T \Hm_s - \Ix_{N_t n}) \xv + \Um_s^T \wv,
\label{eq_noise}
\end{equation}
and $\tv$ is independent of $\zv$, according to Lemma~\ref{lemma:uniform}. The matrix~$\Um_s$ that minimizes~$\Ex \big[ ||\zv||^2 \big]$ is then a block-diagonal matrix whose diagonal block~$i$ is the $N_t \times N_r$ MMSE matrix at time~$i$ given by
\begin{equation}
\Um_i= \SNR  (\Ix_{N_r} + \SNR \Hm_i \Hm_i^T)^{-1} \Hm_i.
\label{eq:U_MSE_MIMO}
\end{equation}

From \eqref{eq_noise},\eqref{eq:U_MSE_MIMO}, the equivalent noise at time~$i$, i.e., $\zv_i \in \real^{N_t}$, is expressed as
\begin{align}
\zv_i = & \Big ( \SNR  \Hm_i^T (  \Ix_{N_r} + \SNR \Hm_i \Hm_i^T  )^{-1} \Hm_i -  \Ix_{N_t}  \Big) \xv_i   \twocolbreak 
 + \SNR  \Hm_i^T ( \Ix_{N_r} + \SNR \Hm_i \Hm_i^T  )^{-1} \wv_i   \nonumber \\
= & - ( \Ix_{N_t} + \SNR \Hm_i^T \Hm_i  )^{-1} \xv_i + \SNR  \Hm_i^T ( \Ix_{N_r} + \SNR \Hm_i \Hm_i^T )^{-1} \wv_i,
\label{eq:zi_MIMO}
\end{align}
where~\eqref{eq:zi_MIMO} holds from the matrix inversion lemma, and  $\zv \triangleq [\zv_1^T, \ldots , \zv_n^T]^T$.
Naturally, the distribution of $\zv$ conditioned on $\Hm_i$ (which is known at the receiver) varies across time. For reasons that will become clear later, we need to get rid of this variation. Hence,  we ignore the instantaneous channel knowledge, i.e., the receiver considers $\Hm_i$ a random matrix after equalization.
The following lemma elaborates some geometric properties of~$\zv$ in the $N_t n$-dimensional space.

\begin{lemma}
\label{lemma:z_dist}
Let $\genvoronoi$ be a sphere defined by
\begin{equation}
\genvoronoi \triangleq \{ \vv \in \real^{N_t n} \, : \, || \vv||^2  \leq (1+\epsilon) \text{tr} ( \boldsymbol{\bar{\cov}} ) \}, 
\label{voronoi_MIMO} 
\end{equation} 
where $\boldsymbol{\bar{\cov}} \triangleq \SNR \, \Ex \big [ ( \Ix_{N_t n} + \SNR \Hm_s^T \Hm_s )^{-1} \big ]$. Then, for any $\epsilon>0$ and $\gamma>0$, there exists $n_{\gamma,\epsilon}$ such that for all $n>n_{\gamma,\epsilon}$, 
\begin{equation}
   \prob \big( \zv \notin \genvoronoi \big) < \gamma.
\label{eq:error_event}
\end{equation}
\end{lemma}

\begin{proof}
See Appendix~\ref{appendix:z_dist}.
\end{proof}

We apply a version of the ambiguity decoder proposed in~\cite{Loeliger} defined by the spherical decision region $\genvoronoi$ in~\eqref{voronoi_MIMO}.%
\footnote{$\genvoronoi$ satisfies the condition in~\cite{Loeliger} of being a bounded measurable region of~$\real^{N_t n}$, from~\eqref{voronoi_MIMO}.}
  The decoder chooses $\hat{\tv} \in \lat_1$ if the received point falls inside the decision region of the lattice point~$\hat{\tv}$, but not in the decision region of any other lattice point.

{\em Error Probability:} 
As shown in~\cite[Theorem~4]{Loeliger}, on averaging over the set of all good construction-A fine lattices~$\Cs$ of rate~$R$, the probability of error can be bounded by
\begin{align}
\frac{1}{|\Cs|} \sum_{\Cs_i \in \Cs} \, \prob_e  & <    \,  \prob (\zv \notin \genvoronoi) + (1+ \delta) \, \frac{\vol(\genvoronoi)}{\vol(\voronoi_1)} 
\twocolbreak 
 =  \, \prob (\zv \notin \genvoronoi) + (1+ \delta) 2^{nR} \, \frac{\vol(\genvoronoi)}{\vol(\voronoi)}, 
\label{error_prob_MIMO}
\end{align}
for any $\delta > 0$, where~\eqref{error_prob_MIMO} follows from~\eqref{lattice_rate}. This is a union bound involving
two events: the event that the noise vector is outside the decision region, i.e., $\zv \notin \genvoronoi$ and the event that the post-equalized point is in the intersection of two decision regions, i.e., $\big\{\yv' \in \{ \tv_1 + \genvoronoi \} \cap \{ \tv_2 + \genvoronoi\} \big\}$, where $\tv_1,~\tv_2 \in \lat_1$ are two distinct lattice points. Owing to Lemma~\ref{lemma:z_dist}, the probability of the first event vanishes with~$n$. Consequently, the error probability can be bounded by 
\begin{equation}
\frac{1}{|\Cs|} \sum_{\Cs_i \in \Cs} \, \prob_e < \gamma + (1+ \delta) 2^{nR} \frac{\vol(\genvoronoi)}{\vol(\voronoi)}, 
\label{error_prob_2_MIMO}
\end{equation}
for any $\gamma,\delta>0$. For convenience define~$\SNRm = \SNR \boldsymbol{\bar{\cov}}^{-1}$.
The volume of~$\genvoronoi$ is given by
\begin{equation}
\vol(\genvoronoi) = (1+ \epsilon)^{\frac{N_t n}{2}} 
 \vol \big( \ball_{N_t n} (\sqrt{N_t n \SNR}) \big) \,  \det \big ( \SNRm^{\frac{-1}{2}} \big) .
\label{vol_omega_MIMO}
\end{equation}

The second term in~\eqref{error_prob_2_MIMO} is bounded by
\begin{align}
& (1+ \delta) 2^{nR} (1+ \epsilon)^{N_t n/2} \frac{\vol(\ball_{N_t n} (\sqrt{N_t n \SNR}))}{\vol(\voronoi)} 
 \, \det \big ( \SNRm^{\frac{-1}{2}} \big ) \,  \nonumber \\
=&  \, (1+ \delta) 2^{ - N_t n \Big ( - \frac{1}{N_t n} \log \big( \frac{\vol(\ball_{N_t n} (\sqrt{N_t n \SNR}))}{\vol(\voronoi)} \big) + \xi \Big ) }, 
\label{eq:exponential_MIMO}
\end{align}
where 
\begin{align}
\xi \, \triangleq & \,  \frac{-1}{2} \log({1+ \epsilon}) - \frac{1}{2 N_t n} \log \det ( \SNRm^{-1} ) - \frac{1}{N_t} R   \nonumber \\
= & \, \frac{-1}{2} \log({1+ \epsilon}) - \frac{1}{2 N_t}  \log \det \big( \Ex \big[ (\Ix_{N_t} + \SNR \Hm^T \Hm)^{-1} \big] \big) 
\twocolbreak   - \frac{1}{N_t} R .
\label{xi_MIMO}
\end{align}
From~\eqref{covering}, since the lattice $\lat$ is good for covering, the first term of the exponent in~\eqref{eq:exponential_MIMO} vanishes. From~\eqref{eq:exponential_MIMO}, whenever $\xi$ is a positive constant we have $\prob_e  \to 0$ as $n \to \infty$, where $\xi$ is positive as long as
\begin{equation*}
R < \, -\frac{1}{2} \log \det \Big( \Ex \big[ (\Ix_{N_t} + \SNR \Hm^T \Hm)^{-1} \big] \Big)- \frac{1}{2} \log({1+ \epsilon}) - \epsilon',
\end{equation*}
where~$\epsilon,\epsilon'$ are positive numbers that can be made arbitrarily small by increasing~$n$.  From \eqref{sig_rx_2}, the outcome of the decoding process in the event of successful decoding is $\hat{\tv} = \tv + \latpoint$, where the transformation of $\tv$ by $\latpoint \in \lat$ does not involve any loss of information. Hence, on applying the modulo-$\lat$ operation on~$\hat{\tv}$%
\begin{equation}
[\hat{\tv}]\text{ mod}\lat \, = \, [\tv+\latpoint]\text{ mod}\lat \, = \, \tv,
\label{mod_lambda}
\end{equation}
where the second equality follows from~\eqref{mod_mod} since~$\latpoint \in \lat$. 
Since the probability of error in~\eqref{error_prob_2_MIMO} is averaged over the set of lattices in~$\Cs$, there exists at least one lattice that achieves the same (or less) error probability.%
\footnote{The error analysis adopted in this work (which stems from~\cite{Loeliger}) is based on {\em existence arguments} from the ensemble of construction-A lattices, i.e., the proof shows that at least one realization of the lattice ensemble achieves the average error performance. However, no guarantee that {\em all} members of the ensemble would perform similarly.}
 Following in the footsteps of~\cite{Erez_Zamir,DMT_lattices}, the existence of a sequence of covering-good coarse lattices with second moment~$\SNR$ that are nested in~$\lat_1$ can be shown.
The final step required to conclude the proof is extending the result to Euclidean lattice decoding, which is provided in the following lemma.
\begin{lemma}
The error probability of the Euclidean lattice decoder given by%
\footnote{The Euclidean decoder in~\eqref{ML_eqn} does not involve the channel realizations, unlike that in~\cite{DMT_lattices,Luzzi_j}.}
\begin{equation}
\hat{\tv}= \big [ \text{arg} \min_{\tv \in \lat_1} ||\yv'- \tv'||^2 \big ] \text{ mod} \lat 
\label{ML_eqn}
\end{equation} 
is upper-bounded by that of the ambiguity decoder in~\eqref{voronoi_MIMO}.
\label{lemma:lattice_decoder}
\end{lemma}
Details of the proof of Lemma~\ref{lemma:lattice_decoder} is provided in Appendix~\ref{appendix:decoder}, whose outline is as follows. For the cases where the ambiguity decoder declares a valid output ($\yv'$ lies exclusively within one decision sphere), both the Euclidean lattice decoder and the ambiguity decoder with spherical regions would be identical, since a sphere is defined by the Euclidean metric. However, for the cases where the ambiguity decoder {\em fails} to declare an output (ambiguity or atypical received sequence), the Euclidean lattice decoder still yields a valid output, and hence is guaranteed to achieve the same (or better) error performance, compared to the ambiguity decoder. This concludes the proof of Theorem~\ref{theorem:rate_MIMO}.
\end{proof}



The results can be extended to complex-valued channels with isotropic fading using a similar technique to that in~\cite[Theorem 6]{CF_Nazer}. The proof is omitted for brevity.

\begin{theorem}
For the ergodic fading MIMO channel with complex-valued channels $\tilde{\Hm}$ that are known at the receiver, any rate~$R$ satisfying 
\begin{equation}
R < \, - \log \det \Big( \Ex \big[ (\Ix_{N_t} + \SNR \tilde{\Hm}^H \tilde{\Hm})^{-1} \big] \Big) 
\label{rate_MIMO_complex}
\end{equation}
is achievable using lattice coding and decoding.  \QEDB
\label{theorem:rate_MIMO_complex}
\end{theorem}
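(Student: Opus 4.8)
The plan is to reduce the complex-valued problem to the real-valued one already settled in Theorem~\ref{theorem:rate_MIMO} by means of the standard isomorphism between complex $m\times m$ matrices and a subring of real $2m\times 2m$ matrices. Writing $\tilde{\Hm}=\Hm_R+j\Hm_I$, I would introduce the real representation
\begin{equation*}
\Phi(\tilde{\Hm}) \triangleq \begin{bmatrix} \Hm_R & -\Hm_I \\ \Hm_I & \Hm_R \end{bmatrix},
\end{equation*}
and stack the real and imaginary parts of $\tilde{\yv},\tilde{\xv},\tilde{\wv}$ accordingly, so that one complex channel use of the $N_r\times N_t$ channel becomes one use of a real $2N_r\times 2N_t$ channel $\yv=\Phi(\tilde{\Hm})\,\xv+\wv$ with the same information content and the same per-codeword power budget. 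Because a rate measured in bits per complex channel use coincides with the rate per use of this real-equivalent channel, it suffices to apply Theorem~\ref{theorem:rate_MIMO} to the real channel and then translate the resulting rate expression back into complex form.

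The core of the argument is algebraic and rests on the elementary properties of $\Phi$: it is a ring homomorphism with $\Phi(\Ix_{N_t})=\Ix_{2N_t}$, it is $\real$-linear (hence commutes with expectation), it satisfies $\Phi(\tilde{\Am}^H)=\Phi(\tilde{\Am})^T$, and it obeys the determinant identity $\det\Phi(\tilde{\Am})=|\det\tilde{\Am}|^2$. Consequently $\Phi(\tilde{\Hm})^T\Phi(\tilde{\Hm})=\Phi(\tilde{\Hm}^H\tilde{\Hm})$ and, after propagating $\SNR$ through, $\Ix_{2N_t}+\SNR\,\Phi(\tilde{\Hm})^T\Phi(\tilde{\Hm})=\Phi(\Ix_{N_t}+\SNR\,\tilde{\Hm}^H\tilde{\Hm})$. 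Inverting, taking expectation, and using linearity gives
\begin{equation*}
\Ex\big[(\Ix_{2N_t}+\SNR\,\Phi(\tilde{\Hm})^T\Phi(\tilde{\Hm}))^{-1}\big]
= \Phi\Big(\Ex\big[(\Ix_{N_t}+\SNR\,\tilde{\Hm}^H\tilde{\Hm})^{-1}\big]\Big).
\end{equation*}
The matrix $\tilde{\Am}\triangleq\Ex[(\Ix_{N_t}+\SNR\,\tilde{\Hm}^H\tilde{\Hm})^{-1}]$ is Hermitian positive definite, so $\det\tilde{\Am}$ is real and positive; the determinant identity then yields $\det\Phi(\tilde{\Am})=(\det\tilde{\Am})^2$. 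Feeding this into the real achievable rate $-\tfrac{1}{2}\log\det(\cdot)$ of Theorem~\ref{theorem:rate_MIMO}, the factor $\tfrac{1}{2}$ cancels the squaring and produces exactly $-\log\det\tilde{\Am}$, which is~\eqref{rate_MIMO_complex}. This clean cancellation is the whole point: the $\tfrac{1}{2}$ of the real result and the squared determinant of the complex-to-real map combine to give the $-\log\det$ of the complex result. A careful bookkeeping of the normalization is needed for $\SNR$ to land inside the bracket unchanged: with circularly symmetric noise whose real and imaginary parts each carry half the variance, the real-equivalent noise has covariance $\tfrac{1}{2}\Ix_{2N_r n}$ while the input has per-dimension second moment $\SNR/2$ (the $N_t$ complex antennas become $2N_t$ real ones under the same power budget); rescaling to unit noise variance multiplies the effective Gram matrix by $2$, and the two factors of $2$ cancel so that the effective term is precisely $\SNR\,\Phi(\tilde{\Hm}^H\tilde{\Hm})$.

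The step I expect to be the main obstacle is the isotropy hypothesis. The real-equivalent channel $\Phi(\tilde{\Hm})$ is \emph{not} isotropic in the full real sense, since its law is invariant only under right multiplication by orthogonal matrices of the structured form $\Phi(\tilde{\Vm})$ with $\tilde{\Vm}$ unitary, and not under arbitrary real orthogonal matrices (reflections, in particular, fall outside the image of $\Phi$). Thus Theorem~\ref{theorem:rate_MIMO} cannot be invoked verbatim. The resolution is to observe that isotropy enters the proof of Theorem~\ref{theorem:rate_MIMO} only through its consequence that the averaged MMSE error covariance $\boldsymbol{\bar{\cov}}$ is a scaled identity, which is exactly what makes the spherical decision region $\genvoronoi$ have the volume claimed in~\eqref{vol_omega_MIMO} and what lets Lemma~\ref{lemma:z_dist} apply. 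Complex isotropy already delivers this weaker-but-sufficient property: it forces $\tilde{\Am}=c\,\Ix_{N_t}$ for some $c>0$, whence $\Phi(\tilde{\Am})=c\,\Ix_{2N_t}$ is a scaled identity as required. I would therefore restate the relevant portion of the proof of Theorem~\ref{theorem:rate_MIMO} as depending only on this scaled-identity property, verify that $\Phi(\tilde{\Hm})$ meets it, and thereby complete the reduction without assuming full real isotropy.
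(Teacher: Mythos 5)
Your proposal is correct and follows essentially the same route the paper takes: the paper proves only the real-valued Theorem~\ref{theorem:rate_MIMO} and obtains Theorem~\ref{theorem:rate_MIMO_complex} by exactly this complex-to-real reduction, invoking the technique of~\cite[Theorem 6]{CF_Nazer} and omitting the details. Your additional observation---that $\Phi(\tilde{\Hm})$ is not isotropic in the full real sense, but that complex isotropy still delivers the scaled-identity averaged MMSE covariance which is the only consequence of isotropy the proof of Theorem~\ref{theorem:rate_MIMO} actually uses---is precisely the bookkeeping the paper leaves to the reader, handled correctly.
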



We compare the achievable rate in \eqref{rate_MIMO_complex} with the ergodic capacity, given by~\cite{telatar}
\begin{equation}
C =   \Ex \big[ \log { \det ( \Ix_{N_t} + \SNR \tilde{\Hm}^H \tilde{\Hm} )} \big].
\label{capacity_MIMO_complex}
\end{equation}

\begin{corollary}
The gap $\gap$ between the rate of the lattice scheme \eqref{rate_MIMO_complex} and the ergodic capacity in \eqref{capacity_MIMO_complex} for the $N_t \times N_r$ ergodic fading MIMO channel is upper bounded by
\begin{itemize}
\item $N_r \geq N_t$ and $\SNR \geq 1$: For any channel for which all elements of $\Ex \big[ ( \tilde{\Hm}^H \tilde{\Hm}  )^{-1} \big]< \infty$ 
\begin{equation}
\gap < \log  \det \Big( \big( \Ix_{N_t} + \Ex [ \tilde{\Hm}^H \tilde{\Hm} ] ) \, \Ex \big[  ( \tilde{\Hm}^H \tilde{\Hm}  )^{-1}  \big] \Big). 
\label{gap_NtNr_eq}
\end{equation}
\item $N_r > N_t$ and $\SNR \geq 1$: When~$\tilde{\Hm}$ is i.i.d. complex Gaussian with zero mean and unit variance, 
\begin{equation}
\gap < N_t \, \log \big( 1+\frac{N_t+1}{N_r-N_t} \big).
\label{gap_NtNr_uneq}
\end{equation}
\item $N_t=1$ and and $\SNR < \frac{1}{\Ex [ || \tilde{\hv} ||^2 ]}$: When~$\Ex \big[ || \tilde{\hv} ||^4 \big] < \infty$, 
\begin{equation}
\gap < 1.45 \, \Ex \big[ ||\tilde{\hv}||^4 \big] \, \SNR^2.
\label{gap_SNRlow_SIMO}
\end{equation}
\end{itemize}
\label{cor:gap_MIMO}
\end{corollary}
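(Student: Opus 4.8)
The plan is to write the gap explicitly by subtracting the achievable rate~\eqref{rate_MIMO_complex} from the capacity~\eqref{capacity_MIMO_complex}, namely
\begin{equation*}
\gap = C - R = \Ex\big[\log\det(\Ix_{N_t} + \SNR\tilde{\Hm}^H\tilde{\Hm})\big] + \log\det\Ex\big[(\Ix_{N_t} + \SNR\tilde{\Hm}^H\tilde{\Hm})^{-1}\big].
\end{equation*}
Writing $\Am \triangleq \Ix_{N_t} + \SNR\tilde{\Hm}^H\tilde{\Hm}$, the gap is the sum of an averaged $\log\det$ and the $\log\det$ of an averaged inverse. My strategy is to control the first term via the concavity of $\log\det$, and the second via a positive-semidefinite (PSD) domination that detaches the identity from the channel Gram matrix; the three bullets then follow either by combining these estimates or by specializing the channel statistics, with the strict inequalities coming from the strictness of Jensen under non-degenerate fading.

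For the first bullet ($N_r \ge N_t$, $\SNR \ge 1$) I would apply Jensen's inequality to the matrix-concave map $\Am \mapsto \log\det\Am$ to get $\Ex[\log\det\Am] \le \log\det(\Ix_{N_t} + \SNR\,\Ex[\tilde{\Hm}^H\tilde{\Hm}])$. For the second term I use $\SNR \ge 1$ to write $\Am \succeq \SNR\,\tilde{\Hm}^H\tilde{\Hm}$, which inverts operator-monotonically to $\Am^{-1} \preceq \tfrac{1}{\SNR}(\tilde{\Hm}^H\tilde{\Hm})^{-1}$; here the hypothesis $\Ex[(\tilde{\Hm}^H\tilde{\Hm})^{-1}]<\infty$ guarantees integrability. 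Taking expectations preserves the PSD order, and the monotonicity of $\log\det$ gives $\log\det\Ex[\Am^{-1}] \le -N_t\log\SNR + \log\det\Ex[(\tilde{\Hm}^H\tilde{\Hm})^{-1}]$. Adding the two bounds, I factor $\SNR^{N_t}$ out of $\det(\Ix_{N_t} + \SNR\,\Ex[\tilde{\Hm}^H\tilde{\Hm}])$ to cancel the $-N_t\log\SNR$ term, leaving $\log\det(\tfrac{1}{\SNR}\Ix_{N_t} + \Ex[\tilde{\Hm}^H\tilde{\Hm}])$; a second use of $\SNR \ge 1$ via $\tfrac{1}{\SNR}\Ix_{N_t} \preceq \Ix_{N_t}$ upgrades this to $\log\det(\Ix_{N_t} + \Ex[\tilde{\Hm}^H\tilde{\Hm}])$, yielding~\eqref{gap_NtNr_eq}. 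I expect this bullet to be the main obstacle: chaining the two uses of $\SNR \ge 1$ while respecting operator monotonicity, and checking that the domination $\Am^{-1}\preceq\tfrac{1}{\SNR}(\tilde{\Hm}^H\tilde{\Hm})^{-1}$ survives both the expectation and the subsequent $\log\det$ without violating integrability, is the delicate part.

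The second bullet is then a direct specialization of~\eqref{gap_NtNr_eq}. For i.i.d.\ unit-variance complex Gaussian $\tilde{\Hm}$ I would substitute the Wishart moment $\Ex[\tilde{\Hm}^H\tilde{\Hm}] = N_r\Ix_{N_t}$ and the inverse-Wishart mean $\Ex[(\tilde{\Hm}^H\tilde{\Hm})^{-1}] = \tfrac{1}{N_r - N_t}\Ix_{N_t}$ (valid for $N_r > N_t$), so that the right-hand side collapses to $\log\det\big(\tfrac{1+N_r}{N_r - N_t}\Ix_{N_t}\big) = N_t\log\tfrac{1+N_r}{N_r - N_t}$; rewriting $\tfrac{1+N_r}{N_r-N_t} = 1 + \tfrac{N_t+1}{N_r-N_t}$ gives~\eqref{gap_NtNr_uneq}.

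For the third bullet ($N_t = 1$, low SNR) the Gram matrix reduces to the scalar $\|\tilde{\hv}\|^2$, so $\gap = \tfrac{1}{\ln 2}\big(\Ex[\ln(1+\SNR\|\tilde{\hv}\|^2)] + \ln\Ex[(1+\SNR\|\tilde{\hv}\|^2)^{-1}]\big)$. I would bound the first summand by $\ln(1+x)\le x$ to obtain $\SNR\,\Ex[\|\tilde{\hv}\|^2]$, and the reciprocal inside the second by the elementary inequality $\tfrac{1}{1+x}\le 1 - x + x^2$ for $x\ge 0$ to get $\Ex[(1+\SNR\|\tilde{\hv}\|^2)^{-1}]\le 1 - \SNR\,\Ex[\|\tilde{\hv}\|^2] + \SNR^2\,\Ex[\|\tilde{\hv}\|^4]$; the hypothesis $\SNR < 1/\Ex[\|\tilde{\hv}\|^2]$ keeps this argument positive, and $\ln(1+y)\le y$ then turns the second summand into $-\SNR\,\Ex[\|\tilde{\hv}\|^2] + \SNR^2\,\Ex[\|\tilde{\hv}\|^4]$. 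The $\SNR\,\Ex[\|\tilde{\hv}\|^2]$ terms cancel, leaving $\gap \le \tfrac{1}{\ln 2}\SNR^2\,\Ex[\|\tilde{\hv}\|^4]$, and since $1/\ln 2 < 1.45$ this gives~\eqref{gap_SNRlow_SIMO}, with $\Ex[\|\tilde{\hv}\|^4]<\infty$ ensuring the bound is finite.
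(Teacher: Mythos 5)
Your proposal is correct and follows essentially the same route as the paper's proof: Jensen's inequality on the concave $\log\det$ plus the PSD domination $(\Ix_{N_t}+\SNR\tilde{\Hm}^H\tilde{\Hm})^{-1} \preceq (\SNR\tilde{\Hm}^H\tilde{\Hm})^{-1}$ followed by $\SNR \geq 1$ for the first bullet, the Wishart/inverse-Wishart moments for the second, and elementary logarithm estimates collapsing to $\log e \cdot \SNR^2\,\Ex[\|\tilde{\hv}\|^4] < 1.45\,\SNR^2\,\Ex[\|\tilde{\hv}\|^4]$ for the third. The only differences are cosmetic: the domination $\Ix_{N_t}+\SNR\tilde{\Hm}^H\tilde{\Hm} \succeq \SNR\tilde{\Hm}^H\tilde{\Hm}$ holds for all $\SNR>0$ and does not need $\SNR\geq 1$ as you claim (that hypothesis is needed only in the final step $\tfrac{1}{\SNR}\Ix_{N_t}\preceq\Ix_{N_t}$), and in the third bullet you use the pointwise bound $\tfrac{1}{1+x}\leq 1-x+x^2$ where the paper instead applies $\ln x \leq x-1$ and recombines the two terms into $\log e\,\SNR^2\,\Ex\big[\tfrac{\|\tilde{\hv}\|^4}{1+\SNR\|\tilde{\hv}\|^2}\big]$; both yield the same constant.
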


\begin{proof}
See Appendix~\ref{appendix:gap_MIMO}. 
\end{proof} 

\begin{figure}
\centering
\includegraphics[width=\Figwidth]{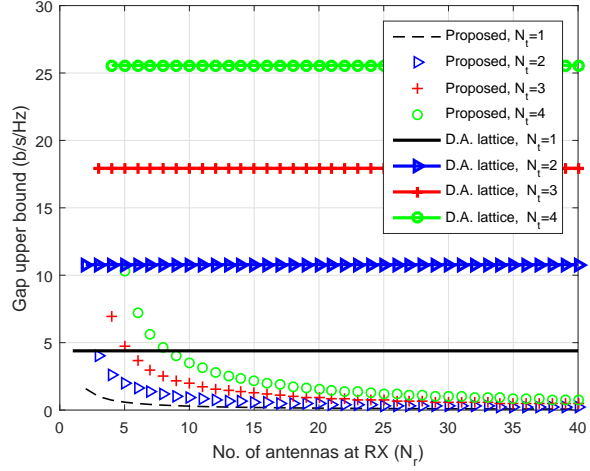}
\caption {\footnotesize{Guarantee on gap to capacity for Rayleigh fading MIMO valid for all $\SNR \geq 1$, shown for the proposed scheme as well as the division algebra lattices of~\cite{Luzzi_j} (denoted D.A. lattice).} 
\label{fig:gap_MIMO} } 
\end{figure}

The expression in~\eqref{gap_NtNr_uneq} for the Rayleigh fading case is depicted in Fig.~\ref{fig:gap_MIMO} for a number of antenna configurations. The gap-to-capacity vanishes with~$N_r$ for any~$\SNR \geq 1$.  This result has two crucial implications. First, under certain antenna configurations, lattice codes approximate the capacity at finite SNR. Moreover, channel-independent decision regions approach optimality for large~$N_r$. The results are also compared with that of the class of division algebra lattices proposed in~\cite{Luzzi_j}, whose gap-to-capacity is both larger and insensitive to~$N_r$. For the square MIMO channel with~$N_t=N_r=2$, the throughput of the proposed lattice scheme is plotted in Fig.~\ref{fig:rates_MIMO} and compared with that of~\cite{Luzzi_j}. The gap to capacity is also plotted, which show that for the proposed scheme the gap also saturates when $N_t=N_r$.

\begin{remark}
Division algebra codes in~\cite{Luzzi_j} guarantee non-zero rates only above a per-antenna SNR threshold that is no less than $21 N_t -1$ when $N_t < N_r$ and $\Ex[\tilde{\Hm}^H \tilde{\Hm}] = \Ix_{N_t}$, e.g., an SNR threshold of $10~dB$ for a $1\times 2$ channel. Our results guarantee positive rates at all SNR; for the single-input multiple-output (SIMO) channel at low SNR the proposed scheme has a gap on the order of $\SNR^2$. Since at $\SNR \ll 1$ we have $C \approx \SNR \Ex \big[ ||\tilde{h}||^2 \big] \log e$, the proposed scheme can be said to asymptotically achieve capacity at low SNR. Our results also show the gap diminishes to zero with large number of receive antennas under Rayleigh fading.
\end{remark}


\begin{figure}
\centering
\includegraphics[width=1\textwidth]{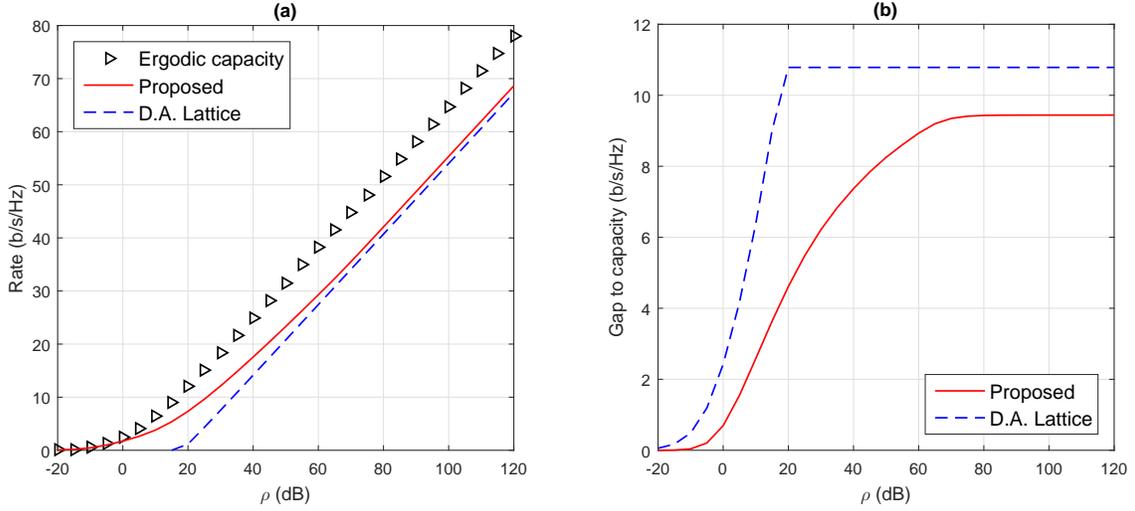}
\caption {\footnotesize{Rates achieved by the proposed lattice scheme vs. ergodic capacity under i.i.d. Rayleigh fading with $N_t=N_r=2$. } 
\label{fig:rates_MIMO} } 
\end{figure}

\subsection{SISO channel}
\label{sec:ptp_SISO}

For the case where each node is equipped with a single antenna, we find tighter bounds on the gap to capacity  for a wider range of fading distributions. Without loss of generality let $\Ex [|\tilde{h}|^2] = 1$.
The gap to capacity in the single-antenna case is given by
\begin{equation}
\gap = \Ex \big[ \log{( 1 + \SNR |\tilde{h}|^2 )} \big] + \log \big (\Ex [ \frac{1}{1+\SNR |\tilde{h}|^2 } ] \big ).
\label{gap_single}
\end{equation}

 In the following, we compute bounds on the gap for a  wide range of fading distributions, at both high and low SNR values.  

\begin{corollary}
When $N_t=N_r=1$, the gap to capacity $\gap$ is upper bounded as follows
\begin{itemize}
\item $\SNR <1$: For any fading distribution where $\Ex \big[ |\tilde{h}|^4 \big]  < \infty$,  
\begin{equation}
\gap < 1.45 \, \Ex \big[ |\tilde{h}|^4 \big] \, \SNR^2.
\label{gap_SNRlow}
\end{equation}
\item $\SNR \geq 1$: For any fading distribution where $\Ex \big[ \frac{1}{|\tilde{h}|^2} \big]  < \infty$, 
\begin{equation}
\gap < 1 + \log \Big( \Ex \big[ \frac{1}{|\tilde{h}|^2} \big] \Big).
\label{gap_SNRgen}
\end{equation}
\item $\SNR \geq 1$: Under Nakagami-$m$ fading with~$m>1$,
\begin{equation}
\gap < 1 + \log \big ( 1 + \frac{1}{m-1} \big ).
\label{gap_SNRnak}
\end{equation}
\item $\SNR \geq 1$: Under Rayleigh fading,
\begin{equation}
\gap < 0.48 + \log \big ( \log (1+ \SNR) \big ).
\label{gap_SNRgaus}
\end{equation}
\end{itemize}
\label{cor:gap_ptp}
\end{corollary}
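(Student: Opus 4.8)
The plan is to treat the two additive pieces of the gap in~\eqref{gap_single} separately and tailor the bounding to each SNR regime. Write $g \triangleq |\tilde{h}|^2$, a nonnegative random variable with $\Ex[g]=1$, and denote the capacity term $A(\SNR) \triangleq \Ex[\log(1+\SNR g)]$ and the (negative) rate term $B(\SNR) \triangleq \log \Ex[\frac{1}{1+\SNR g}]$, so that $\gap = A(\SNR) + B(\SNR)$. Throughout I would rely only on the elementary inequalities $\ln(1+u)\le u$ (for $u>-1$), $\frac{1}{1+u}\ge 1-u$, Jensen's inequality, and, for Rayleigh, closed forms in terms of the exponential integral $E_1$. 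The unifying idea is that in each regime a dominant term of $A$ cancels against a matching term of $B$, leaving only the advertised residual.

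\emph{Low SNR ($\SNR<1$).} Since only $\Ex[g^2]=\Ex[|\tilde{h}|^4]$ is assumed finite, I cannot expand in a power series; the key is to linearize with $\frac{1}{1+u}\ge 1-u$ instead. First bound $A(\SNR)\le \frac{1}{\ln 2}\Ex[\SNR g]=\frac{\SNR}{\ln 2}$ using $\ln(1+u)\le u$. Next, writing $\frac{1}{1+\SNR g}=1-\frac{\SNR g}{1+\SNR g}$ and applying $\frac{1}{1+\SNR g}\ge 1-\SNR g$ inside the subtracted term gives $\Ex[\frac{1}{1+\SNR g}]\le 1-\SNR+\SNR^2\Ex[g^2]$; because $\Ex[g^2]\ge 1$ this argument stays strictly positive, so $\ln(1+u)\le u$ yields $B(\SNR)\le \frac{1}{\ln 2}(-\SNR+\SNR^2\Ex[g^2])$. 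Adding, the linear terms cancel and $\gap\le \frac{\Ex[g^2]}{\ln 2}\SNR^2 < 1.45\,\Ex[|\tilde{h}|^4]\,\SNR^2$, since $\frac{1}{\ln 2}<1.45$; the restriction $\SNR<1$ is only needed for this bound to be the useful one.

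\emph{General $\SNR\ge 1$ with $\Ex[1/g]<\infty$.} Bound $A$ by Jensen, $A(\SNR)\le \log(1+\SNR\Ex[g])=\log(1+\SNR)\le 1+\log\SNR$ for $\SNR\ge 1$, and bound $B$ via $\frac{1}{1+\SNR g}\le \frac{1}{\SNR g}$, giving $B(\SNR)\le \log(\frac{1}{\SNR}\Ex[1/g])=-\log\SNR+\log\Ex[1/g]$. The two $\log\SNR$ terms cancel, leaving $\gap\le 1+\log\Ex[\frac{1}{|\tilde{h}|^2}]$. The Nakagami-$m$ bullet then follows by specialization: under $\Ex[g]=1$ Nakagami fading, $g$ is Gamma with shape $m$ and scale $1/m$, for which $\Ex[1/g]=\frac{m}{m-1}=1+\frac{1}{m-1}$ whenever $m>1$, and substitution gives $\gap<1+\log(1+\frac{1}{m-1})$.

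\emph{Rayleigh ($\SNR\ge 1$).} This is the one regime where the previous bound fails, since for $g\sim\exp(1)$ we have $\Ex[1/g]=\infty$; handling it is the main obstacle. Here both terms are available in closed form: $A(\SNR)=\frac{1}{\ln 2}e^{1/\SNR}E_1(1/\SNR)$ and $\Ex[\frac{1}{1+\SNR g}]=\frac{1}{\SNR}e^{1/\SNR}E_1(1/\SNR)$, whence $B(\SNR)=-\log\SNR+\log(e^{1/\SNR}E_1(1/\SNR))$ and $\gap=\big(A(\SNR)-\log\SNR\big)+\log\big(e^{1/\SNR}E_1(1/\SNR)\big)$. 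I would then invoke the sharp bound $e^{x}E_1(x)\le \ln(1+1/x)$ at $x=1/\SNR$ to convert the last term into $\log\log(1+\SNR)+\log\ln 2$, and bound the bracket $A(\SNR)-\log\SNR=\frac{1}{\ln 2}\big(e^{1/\SNR}E_1(1/\SNR)+\ln(1/\SNR)\big)$ by its maximum over $\SNR\ge 1$. The delicate point is this maximization: since $\frac{d}{dx}\big(e^{x}E_1(x)+\ln x\big)=e^{x}E_1(x)>0$, the bracket is monotone and attains its maximum at $\SNR=1$, and collecting the resulting numerical constant against $\log\ln 2$ yields $\gap<0.48+\log(\log(1+\SNR))$. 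Everything beyond establishing the monotonicity of $e^{x}E_1(x)+\ln x$ and the standard bound on $E_1$ reduces to the elementary inequalities used above.
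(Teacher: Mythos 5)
Your proof is correct in all four cases, and for the first three it is essentially the paper's own argument: your low-SNR derivation is the same cancellation of linear terms by elementary inequalities (the paper applies Jensen and $\ln x \le x-1$ to both terms and combines them into $\log e \cdot \Ex\big[\tfrac{\SNR^2 |\tilde{h}|^4}{1+\SNR |\tilde{h}|^2}\big] \le \log e \cdot \Ex[|\tilde{h}|^4]\SNR^2$, the identical constant $1/\ln 2 < 1.45$); your general $\SNR \ge 1$ bound and the Nakagami computation $\Ex[1/|\tilde{h}|^2] = 1 + \tfrac{1}{m-1}$ coincide with what the paper does in Appendices C and D (there phrased through $\log\det$ and a gamma-function integral). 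Where you genuinely deviate is the Rayleigh case: the paper bounds the capacity term crudely by Jensen, $\Ex[\log(1+\SNR|\tilde{h}|^2)] \le \log(1+\SNR) \le 1+\log\SNR$, and uses the exponential-integral closed form together with the handbook bound $e^{z}\bar{E}_1(z) < \ln(1+1/z)$ (its Lemma~5) only for the rate term, landing on the constant $1+\log\ln 2 \approx 0.47$. You instead keep the exact closed form $\tfrac{1}{\ln 2}e^{1/\SNR}E_1(1/\SNR)$ for the capacity term too, and bound the bracket $A(\SNR)-\log\SNR$ by monotonicity via $\tfrac{d}{dx}\big(e^{x}E_1(x)+\ln x\big) = e^{x}E_1(x) > 0$ (correct, since $E_1'(x) = -e^{-x}/x$), so its maximum over $\SNR \ge 1$ is $eE_1(1)/\ln 2 \approx 0.86$, yielding a constant $\approx 0.33$ that is strictly sharper than the paper's. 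A small remark: you can sidestep the numerical value of $E_1(1)$ entirely by applying your own bound $e^{x}E_1(x) \le \ln(1+1/x)$ at $x=1$, which gives $eE_1(1) \le \ln 2$, bounds the bracket by $1$, and recovers exactly the paper's constant $1+\log\ln 2 < 0.48$. Both routes rest on the same two ingredients (the $E_1$ upper bound and cancellation of $\log\SNR$); yours buys a tighter constant at the price of the derivative identity for $E_1$.
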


\begin{proof}
See Appendix~\ref{appendix:gap_ptp}. 
\end{proof}

Although the gap depends on the SNR under Rayleigh fading, $\gap$ is a vanishing fraction of the capacity as $\SNR$ increases, i.e., $\lim_{\SNR \to \infty} \frac{\gap}{C}=0$. Simulations are provided to give a better view of Corollary~\ref{cor:gap_ptp}. First, the rate achieved under Nakagami-$m$ fading with $m=2$ and the corresponding gap to capacity are plotted in Fig.~\ref{fig:rates_Nakagami}. The performance is compared with that of the division algebra lattices from~\cite{Luzzi_j}. Similar results are also provided under Rayleigh fading in Fig.~\ref{fig:rates_Rayleigh}.

\begin{figure}
\centering
\includegraphics[width=1\textwidth]{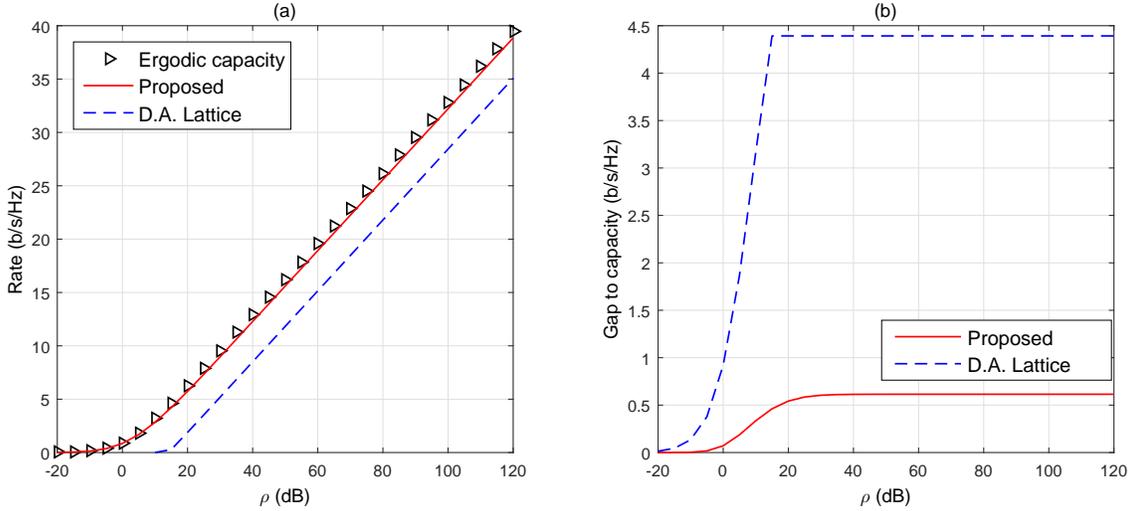}
\caption {\footnotesize{(a) The rates achieved by the lattice scheme vs. division algebra lattices~\cite{Luzzi_j} for SISO Nakagami-$m$ fading channels with $m=2$. (b)~Comparison of the gap to capacity.} 
\label{fig:rates_Nakagami} } 
\end{figure}

\begin{figure}
\centering
\includegraphics[width=1\textwidth]{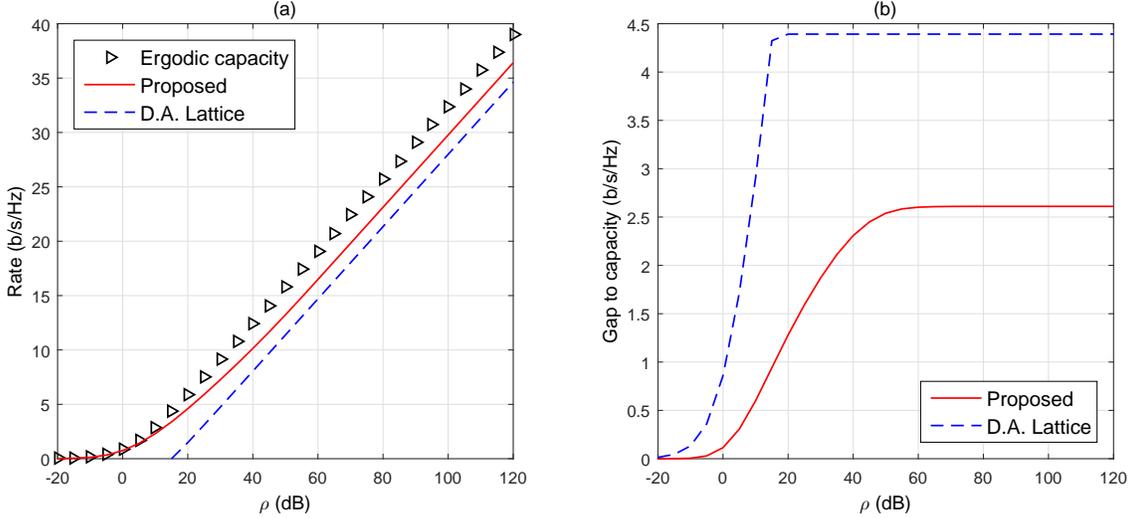}
\caption {\footnotesize{(a) The rates achieved by the lattice scheme vs. division algebra lattices~\cite{Luzzi_j} for SISO Rayleigh fading channels. (b)~
Comparison of the gap to capacity.} 
\label{fig:rates_Rayleigh} } 
\end{figure}


\begin{remark}
A closely related problem appears in~\cite{DMT_lattices}, where lattice coding and decoding were studied under quasi-static fading MIMO channels with CSIR, and a realization of the class of construction-$A$ lattices in conjunction with channel-matching decision regions (ellipsoidal shaped) were proposed. 
Unfortunately, this result by itself does not apply to ergodic fading because the application of the Minkowski-Hlawka Theorem~\cite[Theorem~1]{Loeliger}, on which the existence results of~\cite{DMT_lattices} depend, only guarantees the existence of a lattice for each channel realization, and is silent about the existence of a {\em universal} single lattice that is suitable for all channel realizations. This {\em universality} issue is the key challenge for showing results in the case of ergodic fading.\footnote{In~\cite{paper2} we attempted to show that for decoders employing {\em channel-matching decision regions} the gap to capacity vanishes, however, subsequently it was observed that~\cite{paper2} has not demonstrated the universality of the required codebooks.} 
The essence of the proposed lattice scheme in this section is approximating the ergodic fading channel (subsequent to MMSE equalization) with a non-fading additive-noise channel with lower SNR $\SNR' \triangleq \alpha \SNR$, where~$\alpha \leq 1$. The distribution of the (equivalent) additive noise term, $\zv$, in the approximate model depends on the fading distribution but not on the realization, which allows fixed decision regions for all fading realizations. The {\em SNR penalty factor}~$\alpha$ incurred from this approximation for the special case of $N_t=N_r=1$ is given by
\begin{equation}
\alpha = \Ex \big[ \frac{|\tilde{h}|^2}{\SNR |\tilde{h}|^2 + 1} \big]   \, \big  / \, \Ex [ \frac{1}{\SNR |\tilde{h}|^2 + 1} ].
\label{SNR_penalty} 
\end{equation}

As shown in the gap analysis throughout the paper, the loss caused by this approximation is small under most settings.
\end{remark}



\section{Multiple-Access Channel}
\label{sec:MAC}

\subsection{MIMO MAC}
\label{sec:MAC_MIMO}

Consider a $K$-user MIMO MAC with~$N_r$ receive antennas and~$N_{t_{k}}$ antennas at transmitter $k$. The received signal at time~$i$ is given by 
\begin{equation}
\tilde{\yv}^*_i= \tilde{\Hm}^*_{1,i} \tilde{\xv}^*_{1,i} + \tilde{\Hm}^*_{2,i} \tilde{\xv}^*_{2,i} + \ldots + \tilde{\Hm}^*_{K,i} \tilde{\xv}^*_{K,i} + \tilde{\wv}_i,
\label{sig_Rx_MIMO_MAC}
\end{equation}
where $\tilde{\Hm}_1^*,\ldots,\tilde{\Hm}_{K}^*$ are stationary and ergodic processes with zero-mean and complex-valued coefficients. The noise $\tilde{\wv}$ is circularly-symmetric complex Gaussian with zero mean and unit variance, and user~$k$ has a total power constraint~$N_{t_k} \SNR^*_k$. An achievable strategy for the $K$-user MIMO MAC is independent encoding for each antenna, i.e., user $k$ demultiplexes its data to $N_{t_k}$ data streams, and encodes each independently and transmits it through one of its antennas. The channel can then be analyzed as a SIMO MAC with $L \triangleq\sum_{k=1}^K N_{t_k}$ virtual users. The received signal is then given by
 \begin{equation}
\tilde{\yv}_i= \tilde{\hv}_{1,i} \tilde{x}_{1,i} + \tilde{\hv}_{2,i} \tilde{x}_{2,i} + \ldots + \tilde{\hv}_{L,i} \tilde{x}_{L,i} + \tilde{\wv}_i,
\label{sig_Rx_SIMO_MAC}
\end{equation}
where~$\tilde{\hv}_{\nu(k)+1,i}, \ldots, {\hv}_{\nu(k) + N_{t_k},i}$ denote the $N_{t_k}$ column vectors of~$\tilde{\Hm}^*_{k,i}$, and $\nu(k) \triangleq \sum_{j=1}^{k-1} N_{t_j}$. The virtual user~$\ell$ in~\eqref{sig_Rx_SIMO_MAC} has power constraint~$\SNR_l$, such that 
\begin{equation}
\SNR_{\nu(k)+1} + \ldots + \SNR_{\nu(k)+N_{t_k}} = N_{t_k} \SNR^*_k \, , \hspace{3mm}  k=1,2,\ldots,K.
\label{power_MAC}
\end{equation}

The MAC achievable scheme largely depends on the point-to-point lattice coding scheme proposed earlier, in conjunction with successive decoding. For the $L$-user SIMO MAC, there are~$L!$ distinct decoding orders, and the rate region is the {\em convex hull} of the $L!$ corner points. We define the one-to-one function~$\pi(\ell) \in \{ 1,2,\ldots,L \}$ that depicts a given decoding order. For example, $\pi(1)=2$ means that the codeword of user two is the first codeword to be decoded. 

\begin{theorem}
For the $L$-user SIMO MAC with ergodic fading and complex-valued channel coefficients, lattice coding and decoding achieve the following rate region
\begin{align}
 R_{MAC} \triangleq & \mathit{Co} \bigg( \bigcup_{\pi}  \Big \{ ( R_1,\ldots,R_L): 
\twocolbreak
R_{\pi(\ell)} \leq - \log \Big( \Ex \big[ \frac{1}{ 1 + \SNR_{\pi(\ell)} \tilde{\hv}_{\pi(\ell)}^H  \tilde{\Fm}_{\pi(\ell)}^{-1} \tilde{\hv}_{\pi(\ell)} } \big] \Big)  \Big \}  \bigg),
\label{rate_SIMO_MAC}
\end{align}
where
\begin{equation}
\tilde{\Fm}_{\pi(\ell)} \triangleq \Ix_{N_r} +  \sum_{j=\ell+1}^L \SNR_{\pi(j)} \tilde{\hv}_{\pi(j)} \tilde{\hv}_{\pi(j)}^H , 
\label{MAC_int}
\end{equation}
and $\mathit{Co}(\cdot)$ represents the {\em convex hull} of its argument, and the union is over all permutations $\big(\pi(1),\ldots,\pi(L) \big)$.
\label{theorem:rate_SIMO_MAC}
\end{theorem}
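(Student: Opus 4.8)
The plan is to build the $L$-user region from the point-to-point complex result of Theorem~\ref{theorem:rate_MIMO_complex} (specialized to $N_t=1$) combined with successive interference cancellation, obtaining one corner point per decoding order and then taking the convex hull. Encoding is per virtual user: user~$\ell$ transmits an independently dithered codeword from an Erez--Zamir nested lattice pair whose coarse second moment is~$\SNR_\ell$, exactly as in the proof of Theorem~\ref{theorem:rate_MIMO}. By Lemma~\ref{lemma:uniform} each transmitted vector is uniform over its coarse Voronoi region and independent of its own lattice point, and the $L$ codewords are mutually independent; by Lemma~\ref{lemma:x_distribution} the per-dimension transmit covariance of virtual user~$\ell$ is $\SNR_\ell \Ix$, so the contribution of user~$j$ to the received covariance is $\SNR_j \tilde{\hv}_{j,i}\tilde{\hv}_{j,i}^H$, consistent with~\eqref{MAC_int}.

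Fix a decoding order $\pi$ and proceed from $\ell=1$ to $L$. Suppose users $\pi(1),\ldots,\pi(\ell-1)$ have already been decoded and subtracted from $\tilde{\yv}_i$. What remains is $\tilde{\hv}_{\pi(\ell),i}\tilde{x}_{\pi(\ell),i}$ plus the aggregate of the not-yet-decoded users $\pi(\ell+1),\ldots,\pi(L)$ together with the Gaussian noise, whose per-time covariance is precisely $\tilde{\Fm}_{\pi(\ell),i}$ of~\eqref{MAC_int}. Multiplying by $\tilde{\Fm}_{\pi(\ell),i}^{-1/2}$ whitens this interference-plus-noise and turns the decoding of user~$\pi(\ell)$ into a point-to-point complex SIMO problem with effective channel $\tilde{\gv}_{\pi(\ell),i} \triangleq \tilde{\Fm}_{\pi(\ell),i}^{-1/2}\tilde{\hv}_{\pi(\ell),i}$ and unit-covariance additive noise. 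Since $N_t=1$ makes $\Ex[(1+\SNR_{\pi(\ell)}\tilde{\gv}_{\pi(\ell)}^H\tilde{\gv}_{\pi(\ell)})^{-1}]$ scalar automatically, no isotropy hypothesis is needed, and Theorem~\ref{theorem:rate_MIMO_complex} yields achievability of every rate below $-\log(\Ex[(1+\SNR_{\pi(\ell)}\tilde{\hv}_{\pi(\ell)}^H\tilde{\Fm}_{\pi(\ell)}^{-1}\tilde{\hv}_{\pi(\ell)})^{-1}])$ upon substituting $\tilde{\gv}_{\pi(\ell)}^H\tilde{\gv}_{\pi(\ell)} = \tilde{\hv}_{\pi(\ell)}^H\tilde{\Fm}_{\pi(\ell)}^{-1}\tilde{\hv}_{\pi(\ell)}$. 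This is exactly the bound appearing in~\eqref{rate_SIMO_MAC} for that $\pi$.

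The delicate point is that the residual interference is \emph{not} Gaussian: it is a superposition of independently dithered lattice codewords, each uniform over its Voronoi region. I would argue that this does not obstruct the reduction, because the spherical-decoder analysis behind Theorem~\ref{theorem:rate_MIMO} (Lemma~\ref{lemma:z_dist}) uses only that the equivalent additive-noise process be independent of the desired lattice point and that the normalized squared norm $\frac{1}{n}\|\zv\|^2$ concentrate about its mean — both of which follow from the independent dithering (Lemma~\ref{lemma:uniform}) and from the stationarity, ergodicity, and finite variance of the channel and interference processes, not from any Gaussian assumption. One also checks that $\tilde{\gv}_{\pi(\ell)}$, being a measurable function of the jointly stationary and ergodic channel processes, is itself stationary and ergodic; and since $\tilde{\Fm}_{\pi(\ell)} \succeq \Ix_{N_r}$ gives $\|\tilde{\gv}_{\pi(\ell)}\|^2 \le \|\tilde{\hv}_{\pi(\ell)}\|^2$, the effective channel inherits the finite second moment of the original fading, so the hypotheses of Theorem~\ref{theorem:rate_MIMO_complex} transfer cleanly.

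It remains to combine stages and orderings. A union bound over the $L$ decoding stages keeps the overall error probability vanishing provided each stage succeeds, so the corner point $(R_{\pi(1)},\ldots,R_{\pi(L)})$ is achievable for every permutation $\pi$, and standard time-sharing achieves the convex hull $\mathit{Co}(\cdot)$ in~\eqref{rate_SIMO_MAC}. I expect the main obstacle to be the rigorous justification that the whitened non-Gaussian residual interference still satisfies the concentration hypothesis of Lemma~\ref{lemma:z_dist}; in particular, the effective channel $\tilde{\gv}_{\pi(\ell),i}$ and the whitened interference at the same time instant are correlated through their common dependence on $\{\tilde{\hv}_{\pi(j)}\}_{j>\ell}$, so the equivalent noise must be analyzed conditionally on the channel realizations, where it is again independent of the desired codeword, and the ergodic concentration argument then applied to the resulting joint process.
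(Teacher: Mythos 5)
Your proposal is correct in substance and arrives at the same rate region, but it is organized as a reduction rather than as the paper's direct argument, so a comparison is worthwhile. The paper never whitens and never invokes Theorem~\ref{theorem:rate_MIMO_complex} as a module: at stage~$\ell$ it applies the single time-varying MMSE filter~\eqref{eq:ui_SIMO_MAC} --- which, by the matrix inversion identity $\SNR_{\pi(\ell)}\hv_{\pi(\ell)}^T\big(\Fm_{\pi(\ell)}+\SNR_{\pi(\ell)}\hv_{\pi(\ell)}\hv_{\pi(\ell)}^T\big)^{-1} = \SNR_{\pi(\ell)}\gv_{\pi(\ell)}^T\big(\Ix_{N_r}+\SNR_{\pi(\ell)}\gv_{\pi(\ell)}\gv_{\pi(\ell)}^T\big)^{-1}\Fm_{\pi(\ell)}^{-1/2}$ with $\gv_{\pi(\ell)}=\Fm_{\pi(\ell)}^{-1/2}\hv_{\pi(\ell)}$, is exactly your composition of whitening followed by point-to-point MMSE --- folds the undecoded users' lattice codewords into the equivalent noise~\eqref{z_eq_MAC}, defines a fresh per-stage spherical decision region~\eqref{voronoi_MAC} whose radius is set by $\Ex\big[\big(1+\SNR_{\pi(\ell)}\hv_{\pi(\ell)}^T\Fm_{\pi(\ell)}^{-1}\hv_{\pi(\ell)}\big)^{-1}\big]$, and reruns the Loeliger volume/union-bound computation, asserting that the concentration proof of Lemma~\ref{lemma:z_dist} carries over. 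Your framing buys modularity and makes explicit a point the paper leaves implicit (no isotropy is needed since each virtual user is scalar, so the relevant second-order quantities are automatically proportional to identity), but, as you yourself identify, it cannot be a black-box use of Theorem~\ref{theorem:rate_MIMO_complex}: after whitening, the disturbance is a superposition of dithered lattice codewords, hence non-Gaussian, and is statistically dependent on the effective channel $\tilde{\Fm}_{\pi(\ell),i}^{-1/2}\tilde{\hv}_{\pi(\ell),i}$ through the interferers' fading, violating the channel-independent i.i.d.\ Gaussian noise hypothesis of the point-to-point model. Your proposed repair --- condition on the channel process, use dither independence (Lemma~\ref{lemma:uniform}) to keep the equivalent noise independent of the desired lattice point, and apply the ergodic concentration argument to the joint process --- is precisely what re-deriving the error analysis amounts to, i.e., at that point your proof collapses into the paper's direct one (which itself only sketches this step, stating that the bound on $\prob\big(\zv_{\pi(\ell)}\notin\genvoronoi^{(\pi(\ell))}\big)$ follows in the footsteps of Lemma~\ref{lemma:z_dist}). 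The remaining ingredients --- independent dithers across virtual users, per-stage union bound so that one corner point is achieved per permutation, and time-sharing to obtain the convex hull in~\eqref{rate_SIMO_MAC} --- match the paper exactly.
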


\begin{proof}
For ease of exposition we first assume the received signal is real-valued in the form $\yv_i = \sum_{\ell=1}^L \hv_{\ell,i} \xv_{\ell,i} +\wv_i$.

{\em Encoding:}
The transmitted lattice codewords are given by
\begin{equation}
\xv_{\ell}=[\tv_{\ell} - \dv_{\ell}] \, \text{mod} \lat^{(\ell)}  \hspace{5mm} \ell=1,2,\ldots,L,
\label{sig_tx_SIMO_MAC}
\end{equation}
where each lattice point $\tv_l$ is drawn from $\lat_1^{(\ell)} \supseteq \lat^{(\ell)}$, and the dithers $\dv_{\ell}$ are independent and uniform over~$\voronoi^{(\ell)}$. The second moment of~$\lat^{(\ell)}$ is~$\SNR_{\ell}$. Note that since transmitters have different rates and power constraints, each transmitter uses a different nested pair of lattices. The independence of the dithers across different users is necessary so as to guarantee the $L$~transmitted codewords are independent of each other.

{\em Decoding:} The receiver uses time-varying MMSE equalization and successive cancellation over $L$~stages, where in the first stage~$\xv_{\pi(1)}$ is decoded in the presence of $\xv_{\pi(2)}, \ldots, \xv_{\pi(L)}$ as noise, and then $\hv_{\pi(1),i} \xv_{\pi(1),i}$ is subtracted from~$\yv_i$ for $i=1, \ldots, n$. Generally, in stage~$\ell$, the receiver decodes~$\xv_{\pi(\ell)}$ from $\yv_{\pi(\ell)}$, where $\yv_{\pi(\ell),i} \triangleq \yv_i - \sum_{j=1}^{\ell-1} \hv_{\pi(j),i} \xv_{\pi(j),i}$. Note that at stage~$\ell$ the codewords $\xv_{\pi(1)}, \ldots, \xv_{\pi(\ell-1)}$ had been canceled-out in previous stages, whereas $\xv_{\pi(\ell+1)}, \ldots, \xv_{\pi(L)}$ are treated as noise. The MMSE vector at time~$i$, $\uv_{\pi(\ell),i}$, is given by
\begin{equation}
\uv_{\pi(\ell),i} = \SNR_{\pi(\ell)}  \big (  \Ix_{N_r} +  \sum_{j=\ell}^L \SNR_{\pi(j)} \hv_{\pi(j),i} \hv_{\pi(j),i}^T  \big ) ^{-1} \hv_{\pi(\ell),i} \, ,
\label{eq:ui_SIMO_MAC}
\end{equation}
and the equalized signal at time $i$ is expressed as follows
\begin{equation}
y'_{\pi(\ell),i}= \uv_{\pi(\ell),i}^T \yv_{\pi(\ell),i} + d_{\pi(\ell),i} = t_{\pi(\ell),i} + \lambda_{\pi(\ell),i} +  z_{\pi(\ell),i},
\label{sig_Rx_SIMO_MAC_eq}
\end{equation}
where $\latpoint_{\pi(\ell)} \in \lat^{(\pi(\ell))}$, and
\begin{align}
z_{\pi(\ell),i} = & \big( \uv_{\pi(\ell),i}^T \hv_{\pi(\ell),i} - 1  \big) x_{\pi(\ell),i} 
\twocolbreak
+  \sum_{j=\ell+1}^L  \uv_{\pi(\ell),i}^T \hv_{\pi(j),i} x_{\pi(j),i} + \uv_{\pi(\ell),i}^T \wv_i.
\label{z_eq_MAC}
\end{align}

Similar to the point-to-point step, we ignore the instantaneous channel state information subsequent to the MMSE equalization step. In order to decode~$\xv_{\pi(\ell)}$ at stage~$\ell$, we apply an ambiguity decoder defined by a spherical decision region 
\begin{align}
\genvoronoi^{(\pi(\ell))} \triangleq \Big\{ & \vv \in \real^{n}~:~ ||\vv||^2  \leq (1+\epsilon) n  \SNR_{\pi(\ell)} \,  \twocolbreak
 \Ex \Big[  \frac{1}
{1 + \SNR_{\pi(\ell)} \hv_{\pi(\ell)}^T \Fm_{\pi(\ell)}^{-1}  \hv_{\pi(\ell)} } \Big] \, \Ix_n \Big\}.
\label{voronoi_MAC}
\end{align}
where $\epsilon$ is an arbitrary positive constant.

{\em Error Probability:}
For an arbitrary decoding stage~$\ell$, the probability of error is bounded by
\begin{align}
\frac{1}{|\Cs|} \sum_{\Cs} \, \prob_e^{(\pi(\ell))}  <  \, & \prob (\zv_{\pi(\ell)} \notin \genvoronoi^{(\pi(\ell))}) \twocolbreak
+ (1+ \delta) 2^{n \check{R}_{\pi(\ell)}} \, \frac{\vol(\genvoronoi^{(\pi(\ell))})}{\vol(\voronoi^{(\pi(\ell))})},
\label{error_prob_MAC}
\end{align}
for some $\delta>0$. Following in the footsteps of the proof of Lemma~\ref{lemma:z_dist}, it can be shown that $\prob (\zv_{\pi(\ell)} \notin \genvoronoi^{(\pi(\ell))}) < \gamma$, where~$\gamma$ vanishes with~$n$; the proof is therefore omitted for brevity.
 From~\eqref{voronoi_MAC},
\begin{align}
\vol \big( \genvoronoi^{(\pi(\ell))} \big) = &(1+ \epsilon)^{\frac{n}{2}} 
 \vol \big( \ball_{n} (\sqrt{n \SNR_{\pi(\ell)}}) \big) \,   \twocolbreak 
\Big ( \Ex \Big[  \frac{1}
{1 + \SNR_{\pi(\ell)} \hv_{\pi(\ell)}^T \Fm_{\pi(\ell)}^{-1}  \hv_{\pi(\ell)} } \Big] \Big)^{\frac{n}{2}} .
\label{vol_omega_MAC}
\end{align}

The second term in~\eqref{error_prob_MAC} is then bounded by
\begin{equation}
 (1+ \delta) 2^{ -  n \Big ( - \frac{1}{n} \log \big( \frac{\vol(\ball_{n} (\sqrt{ n \SNR_{\pi(\ell)}}))}{\vol(\voronoi^{(\pi(\ell))})} \big) + \xi \Big ) }, 
\label{eq:exponential_MAC}
\end{equation}
where 
\begin{align}
\xi \, = & \,  -\frac{1}{2} \log \Big( \Ex \big[ \frac{1}{ 1 + \SNR_{\pi(\ell)} \hv_{\pi(\ell)}^T  \Fm_{\pi(\ell)}^{-1}  \hv_{\pi(\ell)} } \big] \Big)  -  \check{R}_{\pi(\ell)} \twocolbreak
- \frac{1}{2} \log({1+ \epsilon}). 
\label{xi_MAC}
\end{align}
The first term of the exponent in~\eqref{eq:exponential_MAC} vanishes since~$\lat^{(\pi(\ell))}$ is covering-good. Then, the error probability vanishes when
\begin{equation}
\check{R}_{\pi(\ell)} < -\frac{1}{2} \log \Big( \Ex \big[ \frac{1}{ 1 + \SNR_{\pi(\ell)} \hv_{\pi(\ell)}^T  \Fm_{\pi(\ell)}^{-1}  \hv_{\pi(\ell)} } \big] \Big) 
\label{rate_SIMO_MAC_1}
\end{equation}
for all $\ell \in \{1,2,\ldots,L \}$. The achievable rate region can then be extended to complex-valued channels, such that
\begin{equation}
R_{\pi(\ell)} < - \log \Big( \Ex \big[ \frac{1}{ 1 + \SNR_{\pi(\ell)} \tilde{\hv}_{\pi(\ell)}^H  \tilde{\Fm}_{\pi(\ell)}^{-1}  \tilde{\hv}_{\pi(\ell)} } \big] \Big) \, , \hspace{2mm} \ell=1,..,L.
\label{rate_SIMO_MAC_complex}
\end{equation}
This set of rates represents one corner point of the rate region. The whole rate region is characterized by the convex hull of the $L!$ corner points that represent all possible decoding orders, as shown in~\eqref{rate_SIMO_MAC}. This concludes the proof of Theorem~\ref{theorem:rate_SIMO_MAC}.
\end{proof}

Returning to the MIMO MAC model in~\eqref{sig_Rx_MIMO_MAC}, it is straightforward that the rate achieved by user~$k$ would then be 
\begin{equation}
R^*_k = \sum_{j=1}^{N_{t_k}} R_{\nu(k)+j}, 
\label{rate_MIMO_MAC_complex}
\end{equation}
where $R_j$ are the rates given in~\eqref{rate_SIMO_MAC_complex}. Now we compare $R_{\text{sum}} \triangleq \sum_{k=1}^K R^*_k$ with the sum capacity of the MIMO MAC model in~\eqref{sig_Rx_MIMO_MAC}. We focus our comparison on the case where the channel matrices have i.i.d. complex Gaussian entries and all users have the same number of transmit antennas as well as power budgets, i.e., $N_{t_k}=N_t,\, \SNR^*_k = \SNR$ for all $k \in \{1,2,\ldots,K\}$. The optimal input covariance is then a scaled identity matrix~\cite{telatar} and the sum capacity is given by~\cite{capacity_MIMO}
\begin{equation}
C_{\text{sum}} =  \Ex \Big[ \log { \det \big( \Ix_{N_r} + \sum_{k=1}^K \SNR \tilde{\Hm}_k^* \tilde{\Hm}_k^{*H} \big)} \Big].
\label{capacity_MIMO_MAC}
\end{equation}

\begin{corollary}
For the $K$-user fading MIMO MAC in~\eqref{sig_Rx_MIMO_MAC}, when $\tilde{\Hm}^*_k$ is i.i.d. complex Gaussian and~$N_r > K N_t$, the gap between the sum rate of the lattice scheme and the sum capacity at~$\SNR \geq 1$ is upper bounded by
\begin{equation}
\gap < \sum_{\ell=1}^{N_t K} \log \big( 1 + \frac{ \ell + 1 }{ N_r - \ell } \big).
\label{gap_MIMO_MAC_K}
\vspace{-5mm}
\end{equation}
\label{cor:gap_MIMO_MAC_K}
\end{corollary}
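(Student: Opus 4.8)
The plan is to decompose the sum-rate gap into a sum of per-stage terms, each identical in form to the single-antenna gap \eqref{gap_single}, and then to bound each term by a quantity that depends only on how many streams are still present at that decoding stage. First I would take the achievable sum rate from \eqref{rate_SIMO_MAC_complex}--\eqref{rate_MIMO_MAC_complex} under the equal (and here optimal) per-stream allocation $\SNR_{\pi(\ell)}=\SNR$, namely $R_{\text{sum}} = -\sum_{\ell=1}^{L}\log \Ex\big[(1+\SNR\beta_{\pi(\ell)})^{-1}\big]$ with $L=N_tK$ and $\beta_{\pi(\ell)}\triangleq \tilde{\hv}_{\pi(\ell)}^H\tilde{\Fm}_{\pi(\ell)}^{-1}\tilde{\hv}_{\pi(\ell)}$. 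The key first step is a telescoping (chain-rule) identity for the capacity: applying the matrix determinant lemma to $\tilde{\Fm}_{\pi(\ell-1)}=\tilde{\Fm}_{\pi(\ell)}+\SNR\,\tilde{\hv}_{\pi(\ell)}\tilde{\hv}_{\pi(\ell)}^H$ gives $1+\SNR\beta_{\pi(\ell)}=\det(\tilde{\Fm}_{\pi(\ell-1)})/\det(\tilde{\Fm}_{\pi(\ell)})$, so $\sum_{\ell}\log(1+\SNR\beta_{\pi(\ell)})=\log\det(\Ix_{N_r}+\SNR\sum_{\ell}\tilde{\hv}_{\pi(\ell)}\tilde{\hv}_{\pi(\ell)}^H)$, and taking expectation reproduces $C_{\text{sum}}$ in \eqref{capacity_MIMO_MAC}. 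Hence the gap splits as $\gap=\sum_{\ell=1}^{L} g_{\ell}$ where $g_{\ell}\triangleq \Ex[\log(1+\SNR\beta_{\pi(\ell)})]+\log\Ex[(1+\SNR\beta_{\pi(\ell)})^{-1}]$, each having exactly the form of \eqref{gap_single} with effective gain $\beta_{\pi(\ell)}$.

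Next I would bound each $g_\ell$. Using Jensen's inequality on the concave first term, the bound $1+\SNR\beta\ge\SNR\beta$ on the second, and $\SNR\ge1$, one reaches $g_\ell\le \log\big((1+\Ex[\beta_{\pi(\ell)}])\,\Ex[\beta_{\pi(\ell)}^{-1}]\big)$. Let $q=L-\ell+1$ denote the number of streams still active at stage $\ell$ (the desired one plus $q-1$ interferers). Two moment bounds then suffice. Since $\tilde{\Fm}_{\pi(\ell)}\succeq\Ix_{N_r}$ gives $\beta_{\pi(\ell)}\le\|\tilde{\hv}_{\pi(\ell)}\|^2$, we have $\Ex[\beta_{\pi(\ell)}]\le N_r$. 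The crux is the inverse moment $\Ex[\beta_{\pi(\ell)}^{-1}]\le 1/(N_r-q)$: because $\tilde{\Fm}_{\pi(\ell)}=\Ix_{N_r}+\SNR\sum_{j>\ell}\tilde{\hv}_{\pi(j)}\tilde{\hv}_{\pi(j)}^H$ increases in the PSD order with $\SNR$, the variable $\beta_{\pi(\ell)}$ is pathwise decreasing and $\beta_{\pi(\ell)}^{-1}$ pathwise increasing in $\SNR$, so $\Ex[\beta_{\pi(\ell)}^{-1}]$ is maximized as $\SNR\to\infty$, where $\tilde{\Fm}_{\pi(\ell)}^{-1}$ converges to the projector onto the orthogonal complement of the $(q-1)$-dimensional interference subspace. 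By isotropy of $\tilde{\hv}_{\pi(\ell)}$, the limiting $\beta_{\pi(\ell)}$ is a complex chi-square variable with $d=N_r-q+1$ degrees of freedom, whose inverse moment is $1/(d-1)=1/(N_r-q)$; monotone convergence then transfers this limiting value to a uniform bound for every finite $\SNR$. Combining, $g_\ell\le\log\big((N_r+1)/(N_r-q)\big)=\log\big(1+\tfrac{q+1}{N_r-q}\big)$.

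Finally, since successive cancellation peels one stream per stage, the index $q=L-\ell+1$ ranges over each value of $\{1,\dots,N_tK\}$ exactly once; summing the per-stage bounds and reindexing $q\mapsto\ell$ yields $\gap\le\sum_{\ell=1}^{N_tK}\log(1+\tfrac{\ell+1}{N_r-\ell})$, which is \eqref{gap_MIMO_MAC_K}. The hypothesis $N_r>KN_t=L$ is precisely what keeps $N_r-q>0$ at every stage, so all bounds and inverse moments are finite. I note that the very same per-stage inequality, worst-cased at $q=N_t$, recovers the single-user bound \eqref{gap_NtNr_uneq}, so this step could equivalently cite the proof of Corollary \ref{cor:gap_MIMO} rather than redo the chi-square computation. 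The main obstacle I anticipate is making the uniform inverse-moment bound rigorous: one must verify the pathwise PSD-monotonicity, identify the $\SNR\to\infty$ limit as a projection quadratic form of the correct rank, evaluate the inverse moment of the resulting chi-square, and invoke monotone convergence to obtain a bound valid for all $\SNR\ge1$.
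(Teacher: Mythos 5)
Your proof is correct and arrives at exactly the bound \eqref{gap_MIMO_MAC_K}, using the same essential ingredients as the paper---the determinant chain rule, interference-dropping, Jensen's inequality, a null-space projection bound, and the inverse Wishart/chi-square moment (Lemma~\ref{lemma:wishart})---but organized in a genuinely different way. The paper (Appendix~E) bounds $C_{\text{sum}}$ from above and $R_{\text{sum}}$ from below \emph{separately}, each bound carrying an $N_t K \log \SNR$ term that cancels only when the two are subtracted; you instead telescope first, so that the gap splits \emph{exactly} into per-stage terms $g_\ell$ of the SISO form \eqref{gap_single}, and then bound each stage by reusing the machinery of Case~1 of Corollary~\ref{cor:gap_MIMO}. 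Your organization buys cleaner bookkeeping (the $\log\SNR$ cancellation is automatic), makes transparent why the hypothesis $N_r > K N_t$ is needed at every stage, and exhibits the MAC gap as a sum of point-to-point gaps. What the paper's route buys is a sharper tool for the one step you flag as the ``main obstacle'': the inverse moment $\Ex[\beta_{\pi(\ell)}^{-1}] \leq 1/(N_r - q)$. The paper's Lemma~\ref{lemma:MIMO_MAC} establishes, by eigendecomposition of $\tilde{\Fm}_{\pi(\ell)}$, the \emph{pathwise} domination $\beta_{\pi(\ell)} \geq ||\check{\hv}_{\pi(\ell)}||^2$ with $\check{\hv}_{\pi(\ell)}$ i.i.d. Gaussian of dimension $N_r - q + 1$, valid at every finite $\SNR$ with no limiting argument; your detour through PSD monotonicity in $\SNR$, identification of the $\SNR \to \infty$ limit as a projection quadratic form, and monotone convergence is valid but unnecessary, since the inequality $\tilde{\Fm}_{\pi(\ell)}^{-1} \succeq \Pi^\perp$ (the projector onto the orthogonal complement of the interference subspace) already holds pathwise at every finite $\SNR$: the two matrices share eigenvectors, and the eigenvalues of $\tilde{\Fm}_{\pi(\ell)}^{-1}$ are positive on the interference directions and equal to one on the complement. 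That pathwise statement is precisely the content of Lemma~\ref{lemma:MIMO_MAC}, so the rigor concern you raise dissolves once the bound is phrased this way.
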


\begin{proof}
See Appendix~\ref{appendix:DoF_MIMO_MAC_K}. 
\end{proof}

Similar to the point-to-point MIMO, the gap to capacity vanishes at finite SNR as~$N_r$ grows, i.e., $\gap \to 0$ as $N_r \to \infty$. This suggests that decision regions which only depend on the channel statistics approach optimality for a fading MAC with large values of~$N_r$. 
The expression in~\eqref{gap_MIMO_MAC_K} is plotted in Fig.~\ref{fig:gap_MIMO_MAC} for $K=2$, as well as for the $K$-user SIMO MAC.

\begin{figure}
\centering
\includegraphics[width=1\textwidth]{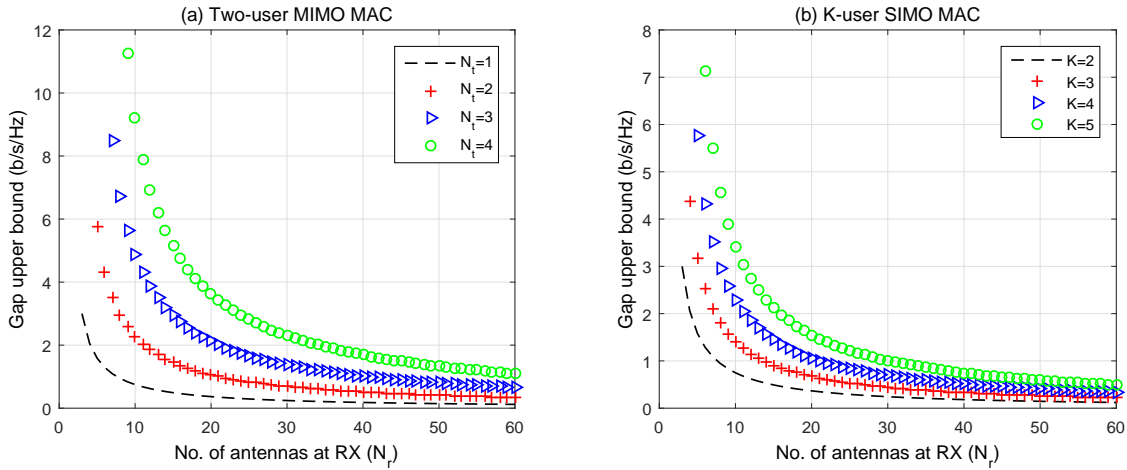}
\caption {\footnotesize{The upper bound on the gap to sum capacity of the MIMO MAC vs.~$N_r$.} 
\label{fig:gap_MIMO_MAC} } 
\end{figure}


\subsection{SISO MAC} 
\label{sec:MAC_SISO}

For the two-user case with $N_r = N_t =1$, the rate region in~\eqref{rate_SIMO_MAC} can be expressed by %
\footnote{Unlike the two-user MAC capacity region, the sum rate does not necessarily have a unit slope.}
\begin{align*}
R_1 < & - \gamma_1 \, ,  \\
R_2 < & - \gamma_2 \, ,
\end{align*}
\begin{equation}
(\gamma_4 - \gamma_2) R_1 + (\gamma_3 - \gamma_1) R_2 <    (\gamma_1 \gamma_2 - \gamma_3 \gamma_4),
\label{rate_MIMO_MAC_general}
\end{equation}
where
\begin{align*}
\gamma_1 = \log \big (\Ex \big[ \frac{1}{1 + \SNR_1 |\tilde{h}_1|^2 } \big] \big) \, ,  ~~~ & 
\gamma_2 = \log \big (\Ex \big[ \frac{1}{1 + \SNR_2 |\tilde{h}_2|^2 } \big] \big)  \, , \\
\gamma_3 = \log \big (\Ex \big[ \frac{1}{1 + \frac {\SNR_1 |\tilde{h}_1|^2}{1 + \SNR_2 |\tilde{h}_2|^2} } \big] \big )  \, , 
 & ~ \gamma_4 = \log \big (\Ex \big[ \frac{1}{1 + \frac {\SNR_2 |\tilde{h}_2|^2}{1 + \SNR_1 |\tilde{h}_1|^2} } \big] \big ) \, .
\end{align*}

For the case where all nodes are equipped with a single antenna, we characterize the gap to sum capacity of the two-user MAC for a wider range of distributions and over all SNR values. For ease of exposition we assume $\tilde{h}_1$ and $\tilde{h}_2$ are identically distributed with $\Ex [|\tilde{h}_1|^2] = \Ex [|\tilde{h}_2|^2] =1$. $\gap$ is then given by
\begin{align}
\gap \triangleq & \Ex \big[ \log{( 1 + \SNR |\tilde{h}_1|^2 + \SNR |\tilde{h}_2|^2)} \big] 
\twocolbreak
+ \log \Big( \Ex \big[ \frac{ 1 + \SNR |\tilde{h}_1|^2 }{ 1 + \SNR |\tilde{h}_1|^2 + \SNR |\tilde{h}_2|^2 } \big]
\Ex \big[ \frac{1}{1 + \SNR |\tilde{h}_1|^2} \big] \Big).
\label{gap_MAC_single}
\end{align}

\begin{corollary}
The gap to capacity of the two-user MAC given in~\eqref{gap_MAC_single} is upper-bounded as follows
\begin{itemize}
\item $\SNR < \frac{1}{2}$: For any fading distribution where $\Ex \big[ |\tilde{h}_1|^4 \big] < \infty$,
\begin{equation}
\gap < 1.45 \, \Big(1 + 2 \Ex \big[ |\tilde{h}_1|^4 \big] \Big) \, \SNR^2.
\label{gap_SNRlow_MAC}
\end{equation}
\item $\SNR \geq \frac{1}{2}$: For any fading distribution where $\Ex \big[ \frac{1}{|\tilde{h}_1|^2} \big]  < \infty$,
\begin{equation}
\gap < 2 + \log \Big( \Ex \big[ \frac{1}{|\tilde{h}_1|^2} \big] \Big).
\label{gap_SNRgen_MAC}
\end{equation}
\item $\SNR \geq \frac{1}{2}$: Under Nakagami-$m$ fading with~$m>1$,
\begin{equation}
\gap < 2 + \log \big ( 1 + \frac{1}{m-1} \big ).
\label{gap_SNRnak_MAC}
\end{equation}
\item $\SNR \geq \frac{1}{2}$: Under Rayleigh fading,
\begin{equation}
\gap < 1.48 + \log \big ( \log (1+ \SNR) \big ).
\label{gap_SNRgaus_MAC}
\end{equation}
\end{itemize}
\label{cor:gap_MAC}
\end{corollary}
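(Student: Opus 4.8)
The plan is to follow the template of Corollary~\ref{cor:gap_ptp}, splitting the analysis at $\SNR=\tfrac12$ and, inside the high-SNR regime, handling the general, Nakagami, and Rayleigh cases in turn. The organizing idea is that the sum-rate gap in~\eqref{gap_MAC_single} splits stage-by-stage as $\gap=g_1+g_2$ with
\begin{align}
g_1 &= \Ex\big[\log(1+\SNR|\tilde h_1|^2)\big]+\log\Ex\big[\tfrac{1}{1+\SNR|\tilde h_1|^2}\big],\nonumber\\
g_2 &= \Ex\Big[\log\big(1+\tfrac{\SNR|\tilde h_2|^2}{1+\SNR|\tilde h_1|^2}\big)\Big]+\log\Ex\Big[\tfrac{1}{1+\frac{\SNR|\tilde h_2|^2}{1+\SNR|\tilde h_1|^2}}\Big].\nonumber
\end{align}
Each summand has exactly the single-user form $\Ex[\log(1+Z)]+\log\Ex[\tfrac{1}{1+Z}]$, with effective SNRs $Z_1=\SNR|\tilde h_1|^2$ for the stream decoded without interference and $Z_2=\SNR|\tilde h_2|^2/(1+\SNR|\tilde h_1|^2)$ for the stream decoded first, the latter carrying the residual interference seen before cancellation. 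This decomposition is the lens I would use at low SNR, whereas at high SNR a direct bound on the three terms of~\eqref{gap_MAC_single} is cleaner.

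For $\SNR<\tfrac12$ I would Taylor-expand $g_1$ and $g_2$ about $\SNR=0$. A short computation shows that for any small nonnegative $Z$ the zeroth- and first-order terms cancel and $\Ex[\log(1+Z)]+\log\Ex[\tfrac{1}{1+Z}]=\tfrac12\log e\,\mathrm{Var}(Z)+O(\Ex[Z^3])$, so the leading contribution of each stage is $\tfrac12\log e\,\mathrm{Var}(Z_j)$. Using $\Ex[Z_1^2]=\SNR^2\Ex[|\tilde h_1|^4]$ and, by independence of $\tilde h_1,\tilde h_2$ together with $(1+\SNR|\tilde h_1|^2)^{-2}\le 1$, $\Ex[Z_2^2]\le\SNR^2\Ex[|\tilde h_2|^4]=\SNR^2\Ex[|\tilde h_1|^4]$, both variances are $O(\SNR^2\Ex[|\tilde h_1|^4])$. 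Bounding the third-order remainders uniformly over $\SNR<\tfrac12$ (the halved threshold relative to the single-user $\SNR<1$ is what keeps the aggregate argument $2\SNR$ below unity, so the expansions converge) and absorbing the residual constant then yields $\gap<1.45\,(1+2\Ex[|\tilde h_1|^4])\,\SNR^2$, which is~\eqref{gap_SNRlow_MAC}.

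For $\SNR\ge\tfrac12$ I would bound the three terms of~\eqref{gap_MAC_single} directly. Concavity of $\log$ with $\Ex[|\tilde h_1|^2]=\Ex[|\tilde h_2|^2]=1$ gives $\Ex[\log(1+\SNR|\tilde h_1|^2+\SNR|\tilde h_2|^2)]\le\log(1+2\SNR)$; the middle interference-fraction term is non-positive because its integrand never exceeds $1$; and $\tfrac{1}{1+\SNR|\tilde h_1|^2}\le\tfrac{1}{\SNR|\tilde h_1|^2}$ gives $\log\Ex[\tfrac{1}{1+\SNR|\tilde h_1|^2}]\le-\log\SNR+\log\Ex[\tfrac{1}{|\tilde h_1|^2}]$. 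Summing and using $\log(1+2\SNR)-\log\SNR=\log(2+\tfrac1\SNR)\le\log 4=2$ for $\SNR\ge\tfrac12$ produces $\gap<2+\log\Ex[\tfrac{1}{|\tilde h_1|^2}]$, i.e.~\eqref{gap_SNRgen_MAC}. The Nakagami-$m$ bound~\eqref{gap_SNRnak_MAC} then follows by substituting $\Ex[1/|\tilde h_1|^2]=m/(m-1)$ for a unit-mean Gamma-distributed $|\tilde h_1|^2$.

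The Rayleigh case is where I expect the main difficulty, since $\Ex[1/|\tilde h_1|^2]=\infty$ for exponential $|\tilde h_1|^2$ and the general bound is vacuous. Here I would keep the Jensen estimate $\log(1+2\SNR)$ on the capacity term and the non-positivity of the middle term, but replace the crude third-term bound by the sharp exponential-integral estimate $\Ex[\tfrac{1}{1+\SNR|\tilde h_1|^2}]=\tfrac{1}{\SNR}e^{1/\SNR}E_1(1/\SNR)\le\tfrac{1}{\SNR}\ln(1+\SNR)$, using $e^{z}E_1(z)\le\ln(1+1/z)$. This turns the divergent inverse moment into $\log\log$ growth: combining the pieces and again using $\log(2+\tfrac1\SNR)\le2$ gives $\gap\le2+\log\ln(1+\SNR)=2+\log\log(1+\SNR)+\log\ln 2\le1.48+\log\log(1+\SNR)$, since $\log\ln 2\approx-0.528$, which is~\eqref{gap_SNRgaus_MAC}. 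Apart from Rayleigh, the other delicate point is the uniform control of the Taylor remainders at low SNR; both are handled by the estimates above, so no genuinely new machinery beyond the single-user analysis is required.
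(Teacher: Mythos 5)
Your three high-SNR cases are correct and essentially identical to the paper's proof in Appendix~\ref{appendix:gap_MAC}: Jensen's inequality on the sum-capacity term, dropping the (non-positive) log of the interference fraction, the step $\log(1+2\SNR)-\log\SNR=\log(2+\tfrac{1}{\SNR})\leq 2$ for $\SNR\geq\tfrac12$, the Nakagami moment $\Ex[1/|\tilde{h}_1|^2]=1+\tfrac{1}{m-1}$, and Lemma~\ref{lemma:expint} for Rayleigh (the paper likewise obtains $2+\log\ln 2\approx 1.47<1.48$). Your stage-by-stage decomposition $\gap=g_1+g_2$ for the low-SNR case is also algebraically valid.

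However, the low-SNR argument as written has a genuine gap: your Taylor expansion controls the error by $O(\Ex[Z_j^3])$, and $\Ex[Z_j^3]=\SNR^3\,\Ex[|\tilde{h}_j|^6]$ (or worse for $Z_2$) involves the \emph{sixth} moment of the fading, whereas the corollary assumes only $\Ex[|\tilde{h}_1|^4]<\infty$. There are admissible distributions (e.g., $|\tilde{h}_1|^2$ with density tail $\propto x^{-4}$) for which $\Ex[|\tilde{h}_1|^4]<\infty$ but $\Ex[|\tilde{h}_1|^6]=\infty$, making your remainder bound vacuous at every $\SNR>0$. Your rationale for the threshold ($2\SNR<1$ ensuring convergence) is also off the mark: $Z_j=\SNR|\tilde{h}_j|^2$ is unbounded, so the logarithm's power series never converges on the whole support at any $\SNR$; one must work with one-sided remainder inequalities, not convergent expansions. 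The paper sidesteps all of this with purely algebraic bounds valid at every SNR and needing only fourth moments: Jensen on the first term of~\eqref{gap_MAC_single} and $\ln x\leq x-1$ on the two $\log\Ex[\cdot]$ terms, after which the first-order terms cancel exactly, leaving
\begin{equation*}
\gap < \log e\,\bigg( \Ex \Big[ \tfrac{ \SNR^2 |\tilde{h}_1|^2 |\tilde{h}_2|^2 + \SNR^2 |\tilde{h}_2|^4}{1 + \SNR |\tilde{h}_1|^2 + \SNR |\tilde{h}_2|^2 } \Big] + \Ex \Big[ \tfrac{ \SNR^2 |\tilde{h}_1|^4} { 1 + \SNR |\tilde{h}_1|^2 } \Big] \bigg),
\end{equation*}
which is bounded by dropping the denominators and using independence for the cross term. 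If you want to keep your (nice) successive-decoding decomposition, the fix is to replace the Taylor step by the same device: from $\ln(1+Z)\leq Z-\tfrac{Z^2}{2(1+Z)}$ and $\ln\Ex[\tfrac{1}{1+Z}]\leq-\Ex[\tfrac{Z}{1+Z}]$ one gets $g_j\leq\tfrac{\log e}{2}\,\Ex\big[\tfrac{Z_j^2}{1+Z_j}\big]\leq\tfrac{\log e}{2}\,\Ex[Z_j^2]$, whence $\gap\leq\log e\,\Ex[|\tilde{h}_1|^4]\,\SNR^2$, which needs only fourth moments and is in fact tighter than~\eqref{gap_SNRlow_MAC}.
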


\begin{proof}
See Appendix~\ref{appendix:gap_MAC}. 
\end{proof}

\begin{figure}
\centering
\includegraphics[width=1\textwidth]{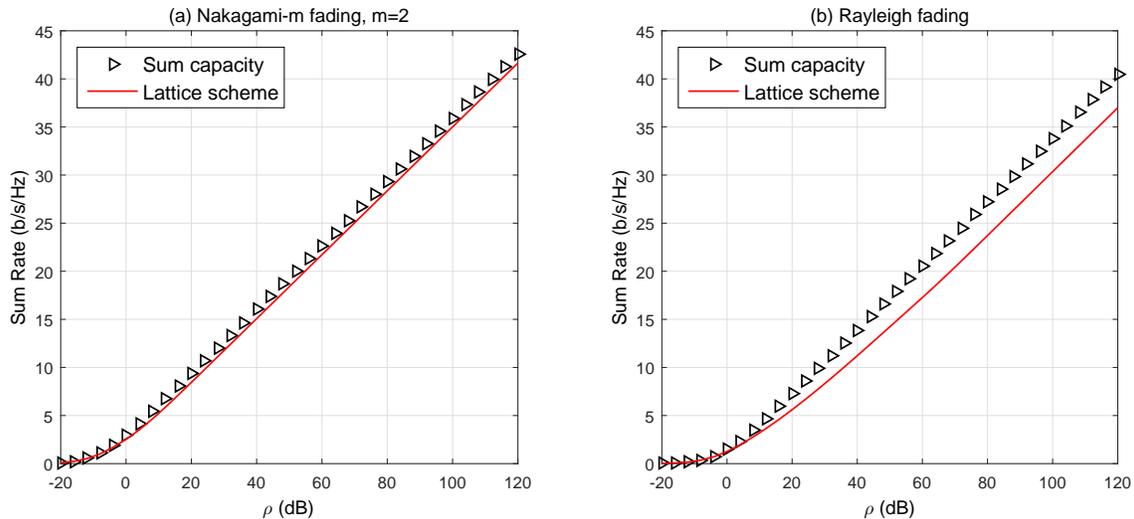}
\caption {\footnotesize{The two-user MAC sum rate vs. sum capacity.} 
\label{fig:rates_MAC} } 
\end{figure}

In Fig.~\ref{fig:rates_MAC}, the sum rate of the two-user MAC lattice scheme is compared with the sum capacity under Nakagami-$m$ fading with $m=2$, as well as under i.i.d. Rayleigh fading. It can be shown that the gap to sum capacity is small in both cases. Moreover, we plot the rate region under Rayleigh fading at $\SNR=-6~dB$ in Fig.~\ref{fig:MAC_region_low}. 
The rate region is shown to be close to the capacity region, indicating the efficient performance of the lattice scheme at low SNR as well.

\begin{figure}
\centering
\includegraphics[width=\Figwidth]{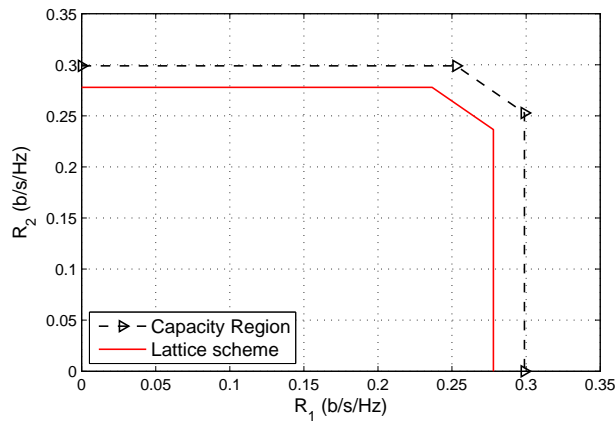}
\caption {\footnotesize{The two-user MAC rate region  vs. ergodic capacity region at $\SNR=-6~dB$ per user under i.i.d. Rayleigh fading.} 
\label{fig:MAC_region_low} } 
\end{figure}


\section{Conclusion} 
\label{sec:conc}

This paper presents a lattice coding and decoding strategy and analyzes its performance for a variety of ergodic fading channels.
For the MIMO point-to-point channel, the  rates achieved are within a constant gap to capacity for a large class of fading distributions. Under Rayleigh fading, the gap to capacity for the MIMO point-to-point and the $K$-user MIMO MAC vanishes as the number of receive antennas increases, even at finite SNR. The proposed decision regions are independent of the instantaneous channel realizations and only depend on the channel statistics. This both simplifies analysis and points to simplification in future decoder implementations. For the special case of single-antenna nodes, the gap to capacity is shown to be a constant for a wider range of fading distributions that include Nakagami fading. Moreover, at low SNR the gap to capacity is shown to be a diminishing fraction of the achievable rate. Similar results are also derived for the $K$-user MAC. Simulation results are provided that illuminate the performance of the proposed schemes.


\appendices

\section{Proof of Lemma~\ref{lemma:z_dist}}
\label{appendix:z_dist}

The aim of Lemma~\ref{lemma:z_dist} is showing that~$\zv$ lies with high probability within the sphere~$\genvoronoi$. However, computing the distribution of~$\zv$ is challenging since it depends on that of~$\xv, \Hm$ as shown in~\eqref{eq_noise}, where the distribution of~$\xv$ is not known at arbitrary block length, and no fixed distribution is imposed for~$\Hm$. The outline of the proof is as follows. First,  we replace the original noise sequence with a {\em noisier} sequence whose statistics are known. Then, we use the weak law of large numbers to show that the noisier sequence is confined with high probability within~$\genvoronoi$, which implies that the original noise~$\zv$ is also confined within~$\genvoronoi$. 
We decompose the noise $\zv$ in~\eqref{eq:zi_MIMO} in the form $\zv = \Am_s \xv + \sqrt{\SNR} \Bm_s \wv$, where both $\Am_s$, $\Bm_s$ are block-diagonal matrices with diagonal blocks $\Am_i \triangleq - ( \Ix_{N_t} + \SNR \Hm_i^T \Hm_i )^{-1}$ and $\Bm_i \triangleq \sqrt{\SNR}  \Hm_i^T ( \Ix_{N_r} + \SNR \Hm_i \Hm_i^T )^{-1}$, respectively, such that 
\begin{equation}
\Am_s \Am_s^T + \Bm_s \Bm_s^T = ( \Ix_{N_t n} + \SNR \Hm_s^T \Hm_s )^{-1}. 
\label{cov_sum}
\end{equation}
Since $\Hm_i$ is a stationary and ergodic process, $\Am_i$ and~$\Bm_i$ are also stationary and ergodic. Denote the eigenvalues of the random matrix $\Hm^T \Hm$ (arranged in ascending order) by $\sigma_{H,1}^2, \ldots, \sigma_{H,N_t}^2$. Then its eigenvalue decomposition is $\Hm^T \Hm \triangleq \Vm \Dm \Vm^T$, where $\Vm$ is a unitary matrix and $\Dm$ is a diagonal matrix whose {\em unordered} entries are $\sigma_{H,1}^2, \ldots, \sigma_{H,N_t}^2$. Owing to the isotropy of the distribution of~$\Hm$, $\Am \Am^T = \Vm (\Ix_{N_t}+ \SNR \Dm)^{-2} \Vm^T$ is {\em unitarily invariant}, i.e., $ \prob (\Am \Am^T) = \prob (\check{\Vm} \Am \Am^T \check{\Vm}^T)$ for any unitary matrix~$\check{\Vm}$ independent of~$\Am$. As a result $\Vm$ is independent of~$\Dm$~\cite{random_matrix}. Hence,
\begin{align}
\Ex \big[ \Am \Am^T \big] & = \,  \Ex \big[ ( \Ix_{N_t} + \SNR \Hm^T \Hm  )^{-2}  \big]  \,
\twocolbreak
= \, \Ex \big[ \Vm (\Ix_{N_t} + \SNR \Dm)^{-2} \Vm^T  \big]  \nonumber \\
& =  \Ex_{\Vm|\Dm} \big[ \Vm  \Ex_{\Dm} [ (\Ix_{N_t} + \SNR \Dm)^{-2} ] \Vm^T  \big]  \,
\twocolbreak
= \, \Ex_{\Vm|\Dm} \big[ \Vm  \sigma_A^2 \Ix_{N_t}  \Vm^T  \big]  \,
= \,  \sigma_A^2 \Ix_{N_t},
\label{A_cov_2}
\end{align}
where $\sigma_A^2 \triangleq \Ex_j \big[ \Ex_{\sigma_{\Hm,j}} [ \frac {1}{(1+ \SNR \sigma_{\Hm,j}^2)^2} ]   \big]$. Similarly, it can be shown that $\Ex \big[ \Bm \Bm^T \big] =  \sigma_B^2 \Ix_{N_t}$, where 
\[
\sigma_B^2 \triangleq \Ex_j \bigg[ \Ex_{\sigma_{\Hm,j}} \big[ \frac {\SNR \sigma_{\Hm,j}^2}{(1+ \SNR \sigma_{\Hm,j}^2)^2} \big]   \bigg].
\]
 For convenience define $\sigma_z^2 \triangleq \sigma_A^2+\sigma_B^2$. Next, we  compute the autocorrelation of $\zv$ as follows
\begin{equation}
\boldsymbol{\cov_z} \triangleq  \Ex \big[  \zv \zv^T  \big] = \Ex \big[ \Am_s \boldsymbol{\cov_x} \Am_s^T \big] + \SNR \Ex \big[ \Bm_s  \Bm_s^T \big],
\label{cov_1}
\end{equation}	
where $\boldsymbol{\cov_x} \triangleq \Ex \big[ \xv \xv^T \big]$. Unfortunately, $\boldsymbol{\cov_x}$ is not known for all~$n$, yet it approaches $\SNR \Ix_{N_t n}$ for large $n$, according to Lemma~\ref{lemma:x_distribution}. Hence one can rewrite 
\begin{align}
\boldsymbol{\cov_z} = & \, \underbrace { \sigma_x^2 \Ex \big[ \Am_s \Am_s^T  +  \Bm_s  \Bm_s^T \big] }_{\sigma_x^2 \, \sigma_z^2 \Ix_{N_t n}}  +
\twocolbreak
  \underbrace { \Ex \big[ \Am_s (\boldsymbol{\cov_x} - \sigma_x^2 \Ix_{N_t n}) \Am_s^T \big] +  (\SNR- \sigma_x^2) \Ex \big[ \Bm_s  \Bm_s^T \big] }_{ \succ 0 }  \,  ,
\label{cov_2}
\end{align}	
where 
$\sigma_x^2 \triangleq \lambda_{\text{min}}(\boldsymbol{\cov_x}) - \delta$, and $\lambda_{\text{min}}(\boldsymbol{\cov_x})$ is the minimum eigenvalue of $\boldsymbol{\cov_x}$. Note that $\SNR \geq \sigma_x^2$, from the definition in~\eqref{moment}. As a result the second term in~\eqref{cov_2} is positive-definite, and $\boldsymbol{\cov_z} \succ \sigma_x^2 \, \sigma_z^2 \Ix_{N_t n}$. This implies that
\begin{equation}
\boldsymbol{\cov_z}^{-1} \prec    \frac{1}{\sigma_x^2 \sigma_z^2} \Ix_{N_t n} \,.
\label{cov_3}
\end{equation}

To make noise calculations more tractable, we introduce a related noise variable that modifies the second term of~$\zv$ as follows
\begin{equation}
\zv^* = \Am_s \xv + \Bm_s \big(\sqrt{\SNR}  \wv + \sqrt{\frac{1}{N_t n} R_c^2 - \SNR} \, \wv^* \big),
\label{z_MIMO_2}
\end{equation}
where $\wv^*$ is i.i.d. Gaussian with zero mean and unit variance, and $R_c$ is the covering radius of~$\voronoi$. We now wish to bound the probability that $\zv^*$ is outside a sphere of radius $\sqrt{(1+\epsilon) N_t n \sigma_x^2 \sigma_z^2}$.
First, we rewrite 
\begin{align}
||\zv^*||^2 = & \xv^T \Am_s^T \Am_s \xv + \frac{1}{N_t n} R_c^2 \wv^T \Bm_s \Bm_s^T \wv 
\twocolbreak + 2 \sqrt{\frac{1}{N_t n} R_c^2} \, \xv^T \Am_s^T \Bm_s \wv.
\label{z_norm}
\end{align}
Then, we bound each term separately using the weak law of large numbers.  The third term satisfies %
\footnote{The third term in~\eqref{z_norm} is a sum of zero mean uncorrelated random variables to which the law of large numbers applies~\cite{LLN}.} 
\begin{equation}
\prob \big( 2 \sqrt{\frac{1}{N_t n} R_c^2} \, \xv^T \Am_s^T \Bm_s \wv > N_t n \epsilon_3 \big) < \gamma_3.
\label{z_norm_3}
\end{equation}
Addressing the second term in~\eqref{z_norm}, %
\footnote{Note that $\mu_i \triangleq \wv_i^T \Bm_i^T \Bm_i \wv_i$ is also a stationary and ergodic process that obeys the law of large numbers.}
\begin{align}
& \prob \big( \frac{1}{N_t n} R_c^2 \, \wv^T \Bm_s^T \Bm_s \wv > \sigma_B^2 R_c^2 + N_t n \epsilon_2 \big) 
\twocolbreak
= \prob \big( \frac{1}{N_t n} R_c^2 \, \text{tr} ( \Bm_s \wv \wv^T \Bm_s^T \wv ) > \sigma_B^2 R_c^2 + N_t n \epsilon_2 \big) <\gamma_2.
\label{z_norm_2} 
\end{align}
 Now, we bound the first term in~\eqref{z_norm}. Given that $\Am_s^T \Am_s$ is a block-diagonal matrix with $\Ex \big[ \Am_s^T \Am_s \big] = \sigma_A^2 \Ix_{N_t n}$, and that $\boldsymbol{\cov_x} \to \SNR \Ix_{N_t n}$ as $n \to \infty$, it can be shown using~\cite[Theorem 1]{weighted_avg} that $\frac{1}{||\xv||^2} \xv^T \Am_s^T \Am_s \xv \to \sigma_A^2$ as $n \to \infty$. More precisely,
\begin{align}
& \prob \big( \xv^T \Am_s^T \Am_s \xv   >  \sigma_A^2 R_c^2 + N_t n \epsilon_1 \big) 
\twocolbreak
< \prob \big( \xv^T \Am_s^T \Am_s \xv  >  \sigma_A^2 ||\xv||^2 + N_t n \epsilon_1 \big) < \gamma_1,
\label{z_norm_1}
\end{align}
where $||\xv||^2 < R_c^2$, and $\epsilon_1, \epsilon_2,\epsilon_3$ and $\gamma_1,\gamma_2,\gamma_3$ can be made arbitrarily small by increasing~$n$. Using a union bound, 
\begin{equation}
\prob \big( ||\zv^*||^2 > (1+\epsilon_4)  R_c^2 \sigma_z^2  \big) < \gamma,  
\label{z_norm_final_1}
\end{equation}
where $\epsilon_4 \triangleq \frac{(\epsilon_1+\epsilon_2+\epsilon_3)}{R_c^2 \sigma_z^2}$ and $\gamma \triangleq \gamma_1 + \gamma_2 + \gamma_3$. For large~$n$, $\frac{1}{N_t n}R_c^2 \leq (1+\epsilon_6) \SNR$ for covering-good lattices and $\SNR \leq (1+\epsilon_7) \sigma_x^2$ according to Lemma~\ref{lemma:x_distribution}. Let $\epsilon_5 \triangleq (1+\epsilon_6)(1+\epsilon_7)-1$, then for any $\epsilon$ such that $\epsilon \leq (1+\epsilon_4)(1+\epsilon_5) -1$,
\begin{align}
& \prob \big( \zv^{*T} \boldsymbol{\cov_z} \zv^* > (1+\epsilon)  N_t n   \big) 
\twocolbreak
< \prob \big( ||\zv^*||^2 > (1+\epsilon)  N_t n \SNR \sigma_z^2  \big)  
\label{z_norm_final_2} \\
& =   \prob \Big( \zv^{*T}  \big( \Ex \big[ (\Ix_{N_t n} + \SNR \Hm_s^T \Hm_s)^{-1}  \big] \big)^{-1} \zv^*  > (1+\epsilon)  N_t n \SNR   \Big)
\twocolbreak
 < \gamma,
\label{z_norm_final_3}
\end{align}
where~\eqref{z_norm_final_2} holds from~\eqref{cov_3} and~\eqref{z_norm_final_3} holds since $\Ex \big[ (\Ix_{N_t n} + \SNR \Hm_s^T \Hm_s)^{-1}  \big] = \sigma_z^2 \Ix_{N_t n}$, according to~\eqref{cov_sum}. The final step is to show that $||\zv^* - \zv|| \to 0$ as $n \to \infty$, where $\zv^*-\zv= \sqrt{\frac{1}{N_t n} R_c^2 - \SNR} \, \Bm_s \wv^*$. From the structure of~$\Bm_s$, the norm of each of its rows is less than~$N_t$, and hence the variance of each of the elements of~$\Bm_s \wv^*$ is no more than~$N_t$. Since $\lim_{n \to \infty} \frac{1}{N_t n} R_c^2 = \SNR$ for a covering-good lattice, it can be shown using Chebyshev's inequality that the elements of $\sqrt{\frac{1}{N_t n} R_c^2 - \SNR} \, \Bm_s \wv^*$ vanish and $|\zv_j^* - \zv_j| \to 0$ as $n \to \infty$ for all~$j \in \{ 1, \ldots, N_t n \}$. This concludes the proof of Lemma~\ref{lemma:z_dist}.


\section{Proof of Lemma~\ref{lemma:lattice_decoder}}
\label{appendix:decoder}

Denote by~$\Sset$ the event that the post-processed received point $\yv'$ falls exclusively within one decision sphere, defined in~\eqref{voronoi_MIMO}, where the probability of occurrence of~$\Sset$ is $\prob_{\Sset} \triangleq 1-\gamma_s$. Using the law of total probability, the probability of error (in general) is given by
\begin{equation}
\prob_e = \prob_{e|\Sset} \prob_{\Sset} + \prob_{e|\Sset^c} \prob_{\Sset^c}
\label{error_detailed}
\end{equation}
First we analyze the ambiguity decoder with spherical decision regions (denoted by superscript~$^{(SD)}$). From the definition of ambiguity decoding, $\prob_{e|\Sset^c}^{(SD)}=1$. Hence,
\begin{equation}
\prob_e^{(SD)} = \eta' (1-\gamma_s) + \gamma_s,
\label{error_detailed_spherical}
\end{equation}
where $\prob_{e|\Sset}^{(SD)} \triangleq \eta'$. Now we analyze the Euclidean lattice decoder (denoted by superscript~$ ^{(LD)}$). Since a sphere is defined by the Euclidean metric, the outcomes of the spherical decoder and the Euclidean lattice decoder {\em conditioned on the event~$\Sset$} are identical, and hence yield the same error probability, i.e., $\prob_{e|\Sset}^{(LD)} = \prob_{e|\Sset}^{(SD)} = \eta'$. However, from~\eqref{ML_eqn}, the Euclidean lattice decoder declares a valid output even under the event $\Sset^c$. Hence,  $\prob_{e|\Sset^c}^{(LD)} \triangleq \eta''\leq 1$. Thereby,
\begin{equation}
\prob_e^{(LD)} = \eta' (1-\gamma_s) + \eta'' \gamma_s \leq \prob_e^{(SD)}.
\label{error_detailed_lattice}
\end{equation}


\section{Proof of Corollary~\ref{cor:gap_MIMO} }
\label{appendix:gap_MIMO}

\begin{lemma} 
For an i.i.d. complex Gaussian~$M \times N$ matrix~$\Gm$ whose elements have zero mean, unit variance and $M > N$, then $\Ex \big[ (\Gm^H \Gm)^{-1} \big] = \frac{1}{M-N} \, \Ix_N$.
\label{lemma:wishart}
\end{lemma}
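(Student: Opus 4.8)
The plan is to first reduce the claim to the computation of a single scalar by a symmetry argument, and then to evaluate that scalar by a Schur-complement computation. For the first step I would exploit the unitary invariance of the i.i.d. complex Gaussian ensemble: for any fixed $N \times N$ unitary matrix $\Vm$, the matrix $\Gm\Vm$ has the same distribution as $\Gm$, since right-multiplication by a unitary merely mixes the $N$ i.i.d. columns without changing their joint law. Because $(\Gm\Vm)^H(\Gm\Vm) = \Vm^H (\Gm^H\Gm) \Vm$, it follows that $\Ex\big[(\Gm^H\Gm)^{-1}\big] = \Vm^H \, \Ex\big[(\Gm^H\Gm)^{-1}\big] \, \Vm$ for every unitary $\Vm$. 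A matrix that commutes with all unitaries must be a scalar multiple of the identity, so $\Ex\big[(\Gm^H\Gm)^{-1}\big] = c\,\Ix_N$ for some constant $c$, and it remains only to identify $c$. This is exactly the style of invariance argument the paper already employs in Appendix~\ref{appendix:z_dist}.

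For the second step I would compute $c$ as the expectation of the $(1,1)$ entry of $(\Gm^H\Gm)^{-1}$. Writing the columns of $\Gm$ as $\gv_1,\ldots,\gv_N \in \comp^M$ and collecting the last $N-1$ of them into an $M\times(N-1)$ matrix $\Gm_{-1}$, the block-inversion (Schur-complement) formula gives
\begin{equation*}
\big[(\Gm^H\Gm)^{-1}\big]_{11} = \frac{1}{\gv_1^H(\Ix_M - \Gm_{-1}(\Gm_{-1}^H\Gm_{-1})^{-1}\Gm_{-1}^H)\gv_1},
\end{equation*}
where the matrix in the denominator is the orthogonal projection onto the complement of the column space of $\Gm_{-1}$. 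Conditioned on $\Gm_{-1}$, this projection has rank $M-N+1$ almost surely; and since $\gv_1$ is independent of $\Gm_{-1}$ and rotationally invariant, the quadratic form in the denominator has the same law as $\|\uv\|^2$ for a standard complex Gaussian vector $\uv$ of dimension $M-N+1$.

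The final step is to evaluate $\Ex\big[1/\|\uv\|^2\big]$. Here $\|\uv\|^2$ is a sum of $M-N+1$ i.i.d. unit-mean exponential variables, i.e.\ a Gamma variable of shape $M-N+1$, whose reciprocal moment is $\frac{1}{(M-N+1)-1} = \frac{1}{M-N}$ (finite precisely because $M>N$). Since this value does not depend on the conditioning, $c = \frac{1}{M-N}$, which combined with the first step yields $\Ex\big[(\Gm^H\Gm)^{-1}\big] = \frac{1}{M-N}\Ix_N$.

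I expect the main obstacle to be the middle step: correctly identifying the Schur complement of $\gv_1^H\gv_1$ as a \emph{projected} quadratic form, and justifying, via the independence of $\gv_1$ from $\Gm_{-1}$ together with the rotational invariance of the complex Gaussian, that its conditional law is that of a chi-square (Gamma) variable with the correct number of degrees of freedom. Once the distribution is pinned down, the reciprocal-moment computation is routine, and the hypothesis $M>N$ enters exactly as the requirement that the Gamma shape exceed one, ensuring the reciprocal moment is finite.
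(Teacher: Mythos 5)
Your proof is correct, but it takes a genuinely different route from the paper, which in fact offers no derivation of its own: the paper's entire proof is a pointer to the literature on the complex inverse Wishart distribution (\cite[Section~V]{Wishart}), where $\Ex\big[(\Gm^H\Gm)^{-1}\big]=\frac{1}{M-N}\,\Ix_N$ appears as a known moment formula. Your argument reconstructs that formula from first principles: unitary invariance ($\Gm\Vm \stackrel{d}{=} \Gm$, so the expectation commutes with every unitary and must equal $c\,\Ix_N$) reduces the claim to a single scalar; the block-inversion identity correctly identifies $\big[(\Gm^H\Gm)^{-1}\big]_{11}$ as the reciprocal of the projected quadratic form $\gv_1^H\big(\Ix_M-\Gm_{-1}(\Gm_{-1}^H\Gm_{-1})^{-1}\Gm_{-1}^H\big)\gv_1$; conditioned on $\Gm_{-1}$ (which has full column rank almost surely) this is the squared norm of the projection of $\gv_1$ onto an $(M-N+1)$-dimensional subspace, hence a Gamma variable of shape $M-N+1$ by rotational invariance and independence; and the reciprocal Gamma moment $1/(M-N)$ finishes the computation, with $M>N$ entering precisely as finiteness of that moment. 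Two minor points would make it airtight: (i) the symmetry identity $\Ex[A]=\Vm^H\Ex[A]\Vm$ presupposes that $A=(\Gm^H\Gm)^{-1}$ is integrable; your diagonal computation supplies this for the diagonal entries, and since $A\succeq 0$ the off-diagonal entries obey $|A_{ij}|\leq\frac{1}{2}(A_{ii}+A_{jj})$, so all entries are integrable; (ii) the column-mixing claim deserves one line of justification: the columns of $\Gm\Vm$ are jointly Gaussian with cross-covariances $\Ex\big[(\Gm\Vm)_{\cdot i}(\Gm\Vm)_{\cdot j}^H\big]=(\Vm^H\Vm)_{ji}\,\Ix_M=\delta_{ij}\,\Ix_M$, hence again i.i.d. What your route buys is self-containment and an explicit view of where the hypothesis $M>N$ is used; what the paper's route buys is brevity, at the cost of leaving the reader to chase the reference.
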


\begin{proof}
See~\cite[Section~V]{Wishart}.
\end{proof}

\subsection{Case $1$: $N_r \geq N_t$ and the elements of~$\Ex \big[ (\tilde{\Hm}^H \tilde{\Hm})^{-1} \big] < \infty$}

\begin{align*}
\gap = & \, C-R   \\
= &  \Ex \big[ \log \det ( \Ix_{N_t} + \SNR \tilde{\Hm}^H \tilde{\Hm} ) \big] 
\twocolbreak
+  \log \det \Big( \Ex \big[ (\Ix_{N_t} + \SNR \tilde{\Hm}^H \tilde{\Hm})^{-1} \big] \Big) \\
\stackrel{(a)}{\leq} & \log \det \Big( \Ix_{N_t} + \SNR \Ex [ \tilde{\Hm}^H \tilde{\Hm} ] \Big)  
\twocolbreak
+  \log \det \Big( \Ex \big[ (\Ix_{N_t} + \SNR \tilde{\Hm}^H \tilde{\Hm})^{-1} \big] \Big) \\
\stackrel{(b)}{<} &  \log \det \Big( \Ix_{N_t} + \SNR \Ex [ \tilde{\Hm}^H \tilde{\Hm} ] \Big) 
+  \log \det \Big( \Ex \big[ ( \SNR \tilde{\Hm}^H \tilde{\Hm})^{-1} \big] \Big) \\
= & \log \det \Big(  \big( \frac{1}{\SNR} \Ix_{N_t} + \Ex [ \tilde{\Hm}^H \tilde{\Hm} ] \big)  \Ex \big [ (\tilde{\Hm}^H \tilde{\Hm})^{-1} \big ] \Big) \\ 
\stackrel{(c)}{\leq} & \log \det \Big(  \big( \Ix_{N_t} + \Ex [ \tilde{\Hm}^H \tilde{\Hm} ] \big)  \Ex \big [ (\tilde{\Hm}^H \tilde{\Hm})^{-1} \big ] \Big),
\end{align*}
where~$(a), \, (b)$ follow since $\log \det (\Am)$ is a concave and non-decreasing function over the set of all positive definite matrices~\cite{convex}. $(c)$ follows since $\SNR \geq 1$.

\subsection{Case $2$: $N_r > N_t$ and the elements of~$\tilde{\Hm}$ are i.i.d. complex Gaussian with zero mean and unit variance}

\begin{align*}
\gap  & \stackrel{(d)}{<}  \, \log \det \Big(  \big( \Ix_{N_t} + \Ex [ \tilde{\Hm}^H \tilde{\Hm} ] \big)  \Ex \big [ (\tilde{\Hm}^H \tilde{\Hm})^{-1} \big ] \Big)  \\
& \stackrel{(e)}{=}  \log \det \big(  (1+N_r) \, \frac{1}{N_r-N_t} \Ix_{N_t}  \big)  \,
\twocolbreak
= \, N_t \log \big ( 1 + \frac{N_t+1}{N_r - N_t} \big ),
\end{align*}
where~$(d),(e)$ follow from Case $1$ and Lemma~\ref{lemma:wishart}, respectively.

\subsection{Case $3$: $N_t=1$ and $\SNR < \frac{1}{\Ex [ ||\tilde{\hv}||^2 ] }$}
\begin{align*}
 \gap =& ~C-R   \\
 & = \Ex \big[ \log{( 1 + \SNR ||\tilde{\hv}||^2 )} \big] + \log \big (\Ex [ \frac{1}{1 + \SNR ||\tilde{\hv}||^2 } ] \big )  \,
\twocolbreak
 \stackrel{(f)}{\leq} \, \log \big ( 1 + \SNR \Ex [||\tilde{\hv}||^2] \big ) +  \log \Big (\Ex \big[ \frac{1}{1 + \SNR ||\tilde{\hv}||^2 } \big] \Big )   \\ 
& \stackrel{(g)}{\leq}  \log e \, \Ex [||\tilde{\hv}||^2] \SNR +   \log e \, \Ex \Big[ \frac{- \SNR ||\tilde{\hv}||^2 }{1 + \SNR ||\tilde{\hv}||^2 } \Big] \,
\twocolbreak
= \, \log e \, \Ex \Big[ \SNR ||\tilde{\hv}||^2 - \frac{\SNR ||\tilde{\hv}||^2 }{1 + \SNR ||\tilde{\hv}||^2} \Big] \\
 & =  \log e \, \Ex \Big[ \frac{||\tilde{\hv}||^4}{1 + \SNR ||\tilde{\hv}||^2} \Big] \, \SNR^2  \,
 < \, 1.45 \, \Ex \big[ ||\tilde{\hv}||^4 \big] \, \SNR^2,
\end{align*}
where~$(f)$ is due to Jensen's inequality and~$(g)$ utilizes $\ln {x} \leq x-1$.


\section{Proof of Corollary~\ref{cor:gap_ptp} }
\label{appendix:gap_ptp}

The results in Case~$1$ and Case~$2$ are straightforward from Corollary~\ref{cor:gap_MIMO}. The proofs are therefore omitted.

\subsection{Case $3$: $\SNR \geq 1$, Nakagami-$m$ fading with $m>1$}
The Nakagami-$m$ distribution with~$m>1$ satisfies the condition~$\Ex \big[ \frac{1}{|\tilde{h}|^2} \big] < \infty $. For a Nakagami-$m$ variable with unit power, i.e.,~$\Ex [|\tilde{h}|^2 ]=1$,  $\Ex \big[ \frac{1}{|\tilde{h}|^2} \big]$ is computed as follows
\begin{align*}
\Ex \big[ \frac{1}{|\tilde{h}|^2} \big]  & = \,   
\frac{2m^m}{\Gamma(m)} \,  \int_{0}^{\infty} {\frac{1}{x^2} x^{2m-1} e^{-mx^2} dx }   \,
\twocolbreak
 = \,   \frac{2m^m}{\Gamma(m)} \, \frac{1}{2 m^{m-1}} \, \int_{0}^{\infty} { y^{m-2} e^{-y} dy }   \\
 = &   \,  \frac{m \, \Gamma(m-1)}{\Gamma(m)}   \, 
 = \,    \frac{m \, \Gamma(m-1)}{(m-1) \, \Gamma(m-1)}   \,
 = \,   1 + \frac{1}{m-1} \, ,
\end{align*}
where~$\Gamma(\cdot)$ denotes the {\em gamma function}. Substituting in~\eqref{gap_NtNr_eq}, $\gap < 1+ \log \big (  1+ \frac{1}{m-1} \big )$.

\subsection{Case $4$: $\SNR \geq 1$, Rayleigh fading}

\begin{lemma} 
For any~$z>0$, the exponential integral function defined by $\bar{E}_1(z)= \int_{z}^{\infty}{ \frac{e^{-t}}{t} dt}$ is upper bounded by
\begin{equation*}
\bar{E}_1(z) < \frac{1}{\log e} \, e^{-z} \, \log(1+\frac{1}{z}).
\vspace{-5mm}
\end{equation*}
\label{lemma:expint}
\end{lemma}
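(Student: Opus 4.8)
The plan is to reduce the claim to an elementary comparison of two integrals of the same decreasing kernel. Since all logarithms are base~$2$, note first that $\frac{1}{\log e}\log(1+\frac1z) = \ln(1+\frac1z)$, so the assertion is equivalent to $\bar E_1(z) < e^{-z}\ln(1+\frac1z)$. Substituting $t = z+s$ in the definition gives
\begin{equation}
\bar E_1(z) = \int_z^\infty \frac{e^{-t}}{t}\,dt = e^{-z}\int_0^\infty \frac{e^{-s}}{z+s}\,ds,
\end{equation}
so after cancelling $e^{-z}$ it suffices to prove
\begin{equation}
\int_0^\infty \frac{e^{-s}}{z+s}\,ds < \ln\Big(1+\frac1z\Big) = \int_0^1 \frac{ds}{z+s},
\end{equation}
where the last equality is just $\ln\frac{z+1}{z}$ written as an integral.

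Both sides now integrate the strictly decreasing kernel $\phi(s) = 1/(z+s)$: the left against the exponential density $e^{-s}$ on $[0,\infty)$, the right against the uniform density on $[0,1]$, each of total mass one. The key step is to split $[0,\infty)$ at $s=1$ and exploit that $\phi(s) \ge \phi(1) = 1/(z+1)$ for $s \le 1$ while $\phi(s) \le 1/(z+1)$ for $s \ge 1$. On the tail this yields
\begin{equation}
\int_1^\infty \frac{e^{-s}}{z+s}\,ds < \frac{1}{z+1}\int_1^\infty e^{-s}\,ds = \frac{e^{-1}}{z+1},
\end{equation}
with strict inequality because $z+s > z+1$ on a set of positive measure, whereas on the head the nonnegativity of $1-e^{-s}$ gives
\begin{equation}
\int_0^1 \frac{1-e^{-s}}{z+s}\,ds \ge \frac{1}{z+1}\int_0^1 (1-e^{-s})\,ds = \frac{e^{-1}}{z+1},
\end{equation}
since $\int_0^1(1-e^{-s})\,ds = e^{-1}$.

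Chaining these two bounds through the common value $e^{-1}/(z+1)$ gives $\int_1^\infty \frac{e^{-s}}{z+s}\,ds < \int_0^1 \frac{1-e^{-s}}{z+s}\,ds$, and adding $\int_0^1 \frac{e^{-s}}{z+s}\,ds$ to both sides collapses the right-hand side to $\int_0^1 \frac{ds}{z+s}$, which is exactly the desired strict inequality. I do not expect a genuine obstacle here: the only nonroutine move is recognizing that $\ln(1+\frac1z)$ is the integral of the same kernel $1/(z+s)$ over $[0,1]$, after which balancing the head and tail of the exponential integral against the split point $s=1$ finishes the argument. As a sanity check, the comparison also follows from stochastic dominance, since the exponential law places more mass on large $s$, where $\phi$ is small, than the uniform law on $[0,1]$, reflecting $1-e^{-s}\le s$ for $s\in[0,1]$.
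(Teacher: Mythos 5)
Your proof is correct, and it takes a genuinely different route from the paper: the paper does not prove this lemma at all, but simply cites the classical bound from the handbook of mathematical functions (\cite[Section~5.1]{Handbook_math}, where it appears as the upper half of the standard two-sided estimate on $E_1$). Your argument is a self-contained elementary replacement, and every step checks out: the substitution $t=z+s$ together with the identity $\ln(1+\tfrac1z)=\int_0^1 \frac{ds}{z+s}$ reduces the claim to comparing integrals of the same strictly decreasing kernel $1/(z+s)$ against the exponential density on $[0,\infty)$ and the uniform density on $[0,1]$; the tail estimate is strict because $1/(z+s)<1/(z+1)$ throughout $(1,\infty)$; the head estimate uses $1/(z+s)\ge 1/(z+1)$ on $[0,1]$; and the two halves chain exactly because $\int_0^1(1-e^{-s})\,ds=e^{-1}=\int_1^\infty e^{-s}\,ds$, so both bounds meet at the common value $e^{-1}/(z+1)$. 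What the citation buys the paper is brevity and a standard reference (which also supplies the matching lower bound $\tfrac12 e^{-z}\ln(1+\tfrac2z)$, unused here); what your proof buys is self-containment, an explicit strict inequality for every $z>0$, and a transparent structural explanation --- first-order stochastic dominance of the exponential law over the uniform law on $[0,1]$, applied to a decreasing function --- which shows the inequality is a soft comparison of measures rather than an artifact of special-function manipulations.
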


\begin{proof}
See~\cite[Section~5.1]{Handbook_math}.
\end{proof}

Under Rayleigh fading, $|\tilde{h}|^2$ is exponentially distributed with unit power. Hence, 
\begin{align*}
\gap & =  \Ex [ \log{( 1 + \SNR |\tilde{h}|^2 )} ] +  \log \big (\Ex [ \frac{1}{1 + \SNR |\tilde{h}|^2} ] \big )  \,
\twocolbreak
\stackrel{(a)}{\leq} \, \log \big ( 1 + \SNR \Ex [|\tilde{h}|^2] \big ) +  \log \big (\Ex [ \frac{1}{1 + \SNR |\tilde{h}|^2 } ] \big )   \\ 
& \stackrel{(b)}{\leq}  1+ \log \big (  \Ex [ \frac{1}{|\tilde{h}|^2 + \frac{1}{\SNR}} ] \big )  \,
\leq \, 1+ \log \big (\int_{0}^{\infty}{\frac{1}{x+\frac{1}{\SNR}} e^{-x} dx} \big)  \,
\twocolbreak
= \, 1+ \log \big( e^{\frac{1}{\SNR}} \, \int_{\frac{1}{\SNR}}^{\infty}{\frac{1}{y} e^{-y} dy} \big)  \\
& = 1+ \log \big ( e^{\frac{1}{\SNR}} \bar{E}_1(\frac{1}{\SNR}) \big )  \,
\stackrel{(c)}{<} \, 1+ \log \big ( \frac{1}{\log e}  \log ( 1 + \SNR ) \big )  \,
\twocolbreak
< \, 0.48 + \log \big ( \log ( 1 + \SNR ) \big ) \, ,
\end{align*}
where~$(a)$ follows from Jensen's inequality. $(b)$ holds from the condition~$\SNR \geq 1$ and
$(c)$ follows from Lemma~\ref{lemma:expint}.


\section{Proof of Corollary~\ref{cor:gap_MIMO_MAC_K} }
\label{appendix:DoF_MIMO_MAC_K}

\begin{lemma}
For any two independent i.i.d. Gaussian matrices $\Am \in \comp^{r \times m}$,\, $\Bm \in \comp^{r \times q}$ where $r \geq q+1$ whose elements have zero mean and unit variance,
\begin{equation}
\Am^H \big( c \Ix_r + \Bm \Bm^H   \big)^{-1} \Am  \succ \frac{1}{c} \, \boldsymbol{\bar{A}}^H \boldsymbol{\bar{A}},
\label{definiteness}
\end{equation}
where the elements of $\boldsymbol{\bar{A}} \in \comp^{(r-q) \times m}$ are i.i.d. Gaussian with zero-mean and unit variance, 
 and $c$ is a positive constant.
\label{lemma:MIMO_MAC}
\end{lemma}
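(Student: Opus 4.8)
The plan is to diagonalize $\Bm\Bm^H$ and then exploit the unitary invariance of the i.i.d.\ complex Gaussian matrix $\Am$, reducing the left-hand side to a Gaussian quadratic form from which an $(r-q)$-dimensional Gaussian block can be peeled off. First I would note that, since $r\geq q$ and $\Bm$ has continuous i.i.d.\ entries, $\Bm$ has full column rank $q$ almost surely, so $\Bm\Bm^H$ admits an eigendecomposition $\Bm\Bm^H=\Um\boldsymbol{\Lambda}\Um^H$ with $\Um$ unitary and $\boldsymbol{\Lambda}=\mathrm{diag}(\lambda_1,\ldots,\lambda_q,0,\ldots,0)$ having exactly $q$ strictly positive eigenvalues and $r-q$ zeros. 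Consequently $(c\Ix_r+\Bm\Bm^H)^{-1}=\Um\,\mathrm{diag}\!\big(\tfrac{1}{c+\lambda_1},\ldots,\tfrac{1}{c+\lambda_q},\tfrac1c,\ldots,\tfrac1c\big)\,\Um^H$, where the last $r-q$ diagonal entries equal $1/c$; here the hypothesis $r\geq q+1$ is exactly what guarantees that at least one such entry (equivalently, a nonempty orthogonal complement of the column space of $\Bm$) exists.

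Next I would set $\boldsymbol{C}\triangleq\Um^H\Am$. Because $\Am$ is independent of $\Bm$ (hence of $\Um$) and the i.i.d.\ complex Gaussian law is invariant under left multiplication by a unitary matrix, conditioning on $\Um$ leaves $\boldsymbol{C}$ i.i.d.\ complex Gaussian with zero mean and unit variance; since this conditional law does not depend on $\Um$, the matrix $\boldsymbol{C}$ is in fact unconditionally i.i.d.\ Gaussian and independent of $\Um$ and $\boldsymbol{\Lambda}$. Writing the left-hand side as $\Am^H(c\Ix_r+\Bm\Bm^H)^{-1}\Am=\boldsymbol{C}^H\,\mathrm{diag}(\ldots)\,\boldsymbol{C}=\sum_{i=1}^{q}\tfrac{1}{c+\lambda_i}\,\boldsymbol{c}_i\boldsymbol{c}_i^H+\tfrac1c\,\boldsymbol{C}_2^H\boldsymbol{C}_2$, where $\boldsymbol{c}_i^H$ is row $i$ of $\boldsymbol{C}$ and $\boldsymbol{C}_2$ collects its bottom $r-q$ rows, I would observe that the first sum is a sum of positive semidefinite rank-one terms. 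Dropping it yields $\Am^H(c\Ix_r+\Bm\Bm^H)^{-1}\Am\succeq\tfrac1c\,\boldsymbol{C}_2^H\boldsymbol{C}_2$.

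Finally I would identify $\boldsymbol{C}_2$ with $\boldsymbol{\bar{A}}$: being a fixed subset of the rows of the i.i.d.\ Gaussian matrix $\boldsymbol{C}$, the block $\boldsymbol{C}_2\in\comp^{(r-q)\times m}$ is i.i.d.\ complex Gaussian with zero mean and unit variance and is independent of $\Bm$, which is precisely the distribution claimed for $\boldsymbol{\bar{A}}$. This establishes the inequality under this natural coupling. The step requiring the most care, and the main obstacle, is the \emph{strictness}: the dropped term equals $(\boldsymbol{D}^{1/2}\boldsymbol{C}_1)^H(\boldsymbol{D}^{1/2}\boldsymbol{C}_1)$ with $\boldsymbol{D}\triangleq\mathrm{diag}(\tfrac{1}{c+\lambda_i})\succ0$ and $\boldsymbol{C}_1$ the top $q$ rows, so it is positive definite almost surely exactly when $\boldsymbol{C}_1$ has full column rank, i.e.\ when $q\geq m$. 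I would therefore check that the regime in which the lemma is invoked in Corollary~\ref{cor:gap_MIMO_MAC_K} meets this condition (so that $\succ$ is justified), noting that otherwise one obtains only $\succeq$, which is anyway all that the subsequent monotonicity and Jensen arguments require.
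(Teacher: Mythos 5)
Your proof is correct and follows essentially the same route as the paper's: eigendecomposition of $c\Ix_r+\Bm\Bm^H$, unitary invariance of the i.i.d.\ complex Gaussian $\Am$ so that $\Um^H\Am$ is again i.i.d.\ Gaussian and independent of the eigenvalues, and discarding the positive semidefinite contribution of the $q$ nonzero-eigenvalue directions to leave $\tfrac{1}{c}\boldsymbol{\bar{A}}^H\boldsymbol{\bar{A}}$. Your closing remark on strictness is in fact a small improvement on the paper, which asserts $\succ$ while justifying it only via $\boldsymbol{X}+\boldsymbol{Y}\succeq\boldsymbol{X}$; in the regime where Corollary~\ref{cor:gap_MIMO_MAC_K} invokes the lemma, $\Am$ is a single column ($m=1\leq q$), so your condition $q\geq m$ holds and the strict inequality is indeed valid almost surely.
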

\begin{proof}
Using the eigenvalue decomposition of $\big( c \Ix_r + \Bm \Bm^H  \big)^{-1}$,
\begin{equation}
\Am^H \big( c \Ix_r + \Bm \Bm^H   \big)^{-1} \Am  = \,  \Am^H \Vm \Dm \Vm^H \Am       \,
= \,  \check{\Am}^H \Dm \check{\Am},
\label{eigen_decomposition}
\end{equation}
where the columns of $\Vm$ are the eigenvectors of $\Bm \Bm^H$. 
The corresponding eigenvalues of $\Bm \Bm^H$ are then in the form $\sigma_1^2,\ldots,\sigma_q^2,0,\ldots,0$. Hence, $q$ of the diagonal entries of~$\Dm$ are in the form $1/(c+\sigma_j^2)$, whereas the remaining $r-q$ entries are $1/c$. Since~$\Vm$ is unitary, then $\check{\Am} \triangleq \Vm^H \Am$ is i.i.d. Gaussian, similar to $\Am$~\cite{random_matrix}. One can rewrite~\eqref{eigen_decomposition} as follows
\begin{align}
\check{\Am}^H \Dm \check{\Am} & = \, \sum_{j=1}^{r-q} \frac{1}{c} \, \check{\av}_j \check{\av}_j^H + \sum_{j=r-q+1}^{r} \frac{1}{c+\sigma_j^2} \, \check{\av}_j \check{\av}_j^H   \,
\twocolbreak
\succ \, \sum_{j=1}^{r-q} \frac{1}{c} \, \check{\av}_j \check{\av}_j^H  \,
= \, \frac{1}{c} \, \bar{\Am}^H \bar{\Am},
\label{eigen_decomposition_2}
\end{align}
where $\check{\av}_j$ is the conjugate transposition of row~$j$ in~$\check{\Am}$, and the columns of the matrix $\bar{\Am}$ are $\check{\av}_j$ for $j \in \{1,\ldots,r-q \}$. The generalized inequality in~\eqref{eigen_decomposition_2} follows since $\boldsymbol{X} + \boldsymbol{Y} \succeq \boldsymbol{X}$ for any two positive semidefinite matrices $\boldsymbol{X},\boldsymbol{Y}$.
\end{proof}

Let $\tilde{\Fm}_{\pi(k)} \triangleq \Ix_{N_r} + \SNR \sum_{l=k+1}^K \tilde{\Hm}_{\pi(l)} \tilde{\Hm}_{\pi(l)}^H$, where~$\pi(\cdot)$ is an arbitrary permutation as described in Section~\ref{sec:MAC_MIMO}. We first bound the sum capacity in~\eqref{capacity_MIMO_MAC} (from above) as follows
\begin{align}
C_{\text{sum}} & \triangleq  \Ex \Big[ \log { \det \big( \Ix_{N_r} + \sum_{k=1}^K \SNR \tilde{\Hm}_k \tilde{\Hm}_k^H \big)} \Big] \nonumber \\
& =  \sum_{k=1}^K \Ex \Big[ \log  \det \big( \Ix_{N_t} + \SNR \tilde{\Hm}_{\pi(k)}^H \tilde{\Fm}_{\pi(k)}^{-1}  \tilde{\Hm}_{\pi(k)} \big) \Big]   \nonumber \\
& \stackrel{(a)}{\leq}  \sum_{k=1}^K \Ex \Big[ \log { \det \big( \Ix_{N_t} + \SNR \tilde{\Hm}_{\pi(k)}^H \tilde{\Hm}_{\pi(k)} \big)} \Big]  \,
\twocolbreak
 \leq \, \sum_{k=1}^K \log { \det \Big( \Ix_{N_t} + \SNR \, \Ex \big[ \tilde{\Hm}_{\pi(k)}^H  \tilde{\Hm}_{\pi(k)} \big] \Big)}   \nonumber \\
& =  K \, \log { \det \big( ( 1 + \SNR N_r ) \Ix_{N_t} \big)}   \,
= \, N_t K \log { ( 1 + \SNR N_r )}    \, 
\twocolbreak
\stackrel{(b)}{\leq} \, N_t K \big( \log {\SNR} +  \log { ( 1 + N_r ) \big)},
\label{gap_MIMO_MAC_1}
\end{align}
where~$(a)$ follows since interference cannot increase capacity, and $(b)$~follows since $\SNR \geq 1$.

Now, we bound (from below) $R_{\text{sum}}$. Since the sum of the rate expressions in both~\eqref{rate_SIMO_MAC_complex} and~\eqref{rate_MIMO_MAC_complex} are equal, we bound each of the $N_t K$ terms in~\eqref{rate_SIMO_MAC_complex}, where the power is allocated uniformly over each virtual user, given by~$\SNR$ as follows
\begin{align}
  & R_{\pi(\ell)} = 
	\twocolbreak
	   - \log \Big( \Ex \big[ \frac{1}{ 1 + \SNR \tilde{\hv}_{\pi(\ell)}^H  (\Ix_{N_r} + \SNR \sum_{j=\ell+1}^L  \tilde{\hv}_{\pi(j)} \tilde{\hv}_{\pi(j)}^H )^{-1} \tilde{\hv}_{\pi(\ell)} } \big] \Big)    \nonumber \\
	& =  - \log \Big( \Ex \big[ \frac{1}{ 1 + \SNR \tilde{\hv}_{\pi(\ell)}^H  (\frac{1}{\SNR} \Ix_{N_r} +  \tilde{\Gm}_{\pi(\ell)} \tilde{\Gm}_{\pi(\ell)}^H )^{-1} \tilde{\hv}_{\pi(\ell)} } \big] \Big)      \nonumber \\
&	\stackrel{(c)}{\geq}  - \log \Big( \Ex \big[ \frac{1}{ 1 + \SNR \check{\hv}_{\pi(\ell)}^H   \check{\hv}_{\pi(\ell)} } \big] \Big)     \,
	> \,  - \log \Big( \Ex \big[ \frac{1}{  \SNR \check{\hv}_{\pi(\ell)}^H   \check{\hv}_{\pi(\ell)} } \big] \Big)     \nonumber \\
	& =   \log \SNR  - \log \Big( \Ex \big[ \frac{1}{  \check{\hv}_{\pi(\ell)}^H   \check{\hv}_{\pi(\ell)} } \big] \Big)     \,
	\twocolbreak
	\stackrel{(d)}{=} \, \log \SNR  +  \log  \big( N_r - (K-\ell+1) \big),
\label{gap_MIMO_MAC_2}
\end{align}
where $\tilde{\Gm}_{\pi(\ell)} \triangleq [\tilde{\hv}_{\pi(\ell+1)} , \ldots  , \tilde{\hv}_{\pi(L)}]$. $(c)$ follows from Lemma~\ref{lemma:MIMO_MAC} where~$\check{\hv}_{\pi(\ell)} \in \comp^{N_r-L+\ell}$ is an i.i.d. Gaussian distributed vector whose elements have unit variance, and $(d)$ follows from Lemma~\ref{lemma:wishart}.

Hence, from~\eqref{gap_MIMO_MAC_1},\eqref{gap_MIMO_MAC_2} the gap is bounded as follows
\begin{align}
\gap \triangleq & C_{\text{sum}} - R_{\text{sum}}   \nonumber  \\
& <  \sum_{\ell=1}^{N_t K} \Big( \log { ( 1 + N_r )} \, - \, \log  \big( N_r - (N_t K-\ell+1) \big) \Big)     \nonumber   \\
& =  \sum_{\ell=1}^{N_t K} \log \big( \frac{ 1 + N_r }{ N_r - (N_t K-\ell+1)} \big)     \,
\twocolbreak
= \, \sum_{\ell=1}^{N_t K} \log \big( \frac{ 1 + N_r }{ N_r - \ell } \big)   \, 
= \, \sum_{\ell=1}^{N_t K} \log \big( 1 + \frac{ \ell +1 }{ N_r - \ell } \big).
\label{gap_MIMO_MAC_3}
\end{align}


\section{Proof of Corollary~\ref{cor:gap_MAC} }
\label{appendix:gap_MAC}

\subsection{Case $1$: $\SNR < \frac{1}{2}$}
\begin{align*}
 \gap =&  \Ex \big[ \log{( 1 + \SNR |\tilde{h}_1|^2 + \SNR |\tilde{h}_2|^2)} \big] 
\twocolbreak
+  \log \Big( \Ex \big[ \frac{ 1 + \SNR |\tilde{h}_1|^2 }{ 1 + \SNR |\tilde{h}_1|^2 + \SNR |\tilde{h}_2|^2 } \big]
\Ex \big[ \frac{1}{ 1 + \SNR |\tilde{h}_1|^2} \big] \Big) \\ 
\leq & \log{ \big( 1 + \SNR \Ex [|\tilde{h}_1|^2] + \SNR \Ex [|\tilde{h}_2|^2] \big )} 
 + \log \Big( \Ex \big[ \frac{1}{ 1 + \SNR |\tilde{h}_1|^2 } \big] \Big) 
\twocolbreak
+  \log \Big( \Ex \big[ \frac{ 1 + \SNR |\tilde{h}_1|^2 }{ 1 + \SNR |\tilde{h}_1|^2 + \SNR |\tilde{h}_2|^2 } \big] \Big ) 
  \\ 
<& \log e \,  \Big ( \SNR \Ex [|\tilde{h}_1|^2] + \SNR \Ex [|\tilde{h}_2|^2] + \Ex \big[ \frac{ - \SNR |\tilde{h}_2|^2 }{ 1 + \SNR |\tilde{h}_1|^2 + \SNR |\tilde{h}_2|^2  } \big] 
\twocolbreak
+ \Ex \big[ \frac{- \SNR |\tilde{h}_1|^2}{ 1 + \SNR |\tilde{h}_1|^2 } \big] \Big ) \\
< & \log e \, \Big ( \Ex \big [ \SNR |\tilde{h}_2|^2 - \frac{ \SNR |\tilde{h}_2|^2 }{ 1 + \SNR |\tilde{h}_1|^2 + \SNR |\tilde{h}_2|^2 } \big ] 
\twocolbreak
+ \Ex \big [ \SNR |\tilde{h}_1|^2 - \frac{ \SNR |\tilde{h}_1|^2 }{ 1 + \SNR |\tilde{h}_1|^2 } \big ] \Big ) \\
= & \log e \, \Big ( \Ex \big [ \frac{ \SNR^2 |\tilde{h}_1|^2 |\tilde{h}_2|^2 + \SNR^2 |\tilde{h}_2|^4}{1 + \SNR |\tilde{h}_1|^2 + \SNR |\tilde{h}_2|^2 } \big ] + 
\Ex \big [ \frac{ \SNR^2 |\tilde{h}_1|^4} { 1 + \SNR |\tilde{h}_1|^2 } \big ] \Big ) \\
\leq & \log e \, \big ( \Ex  [  \SNR^2 |\tilde{h}_1|^2 |\tilde{h}_2|^2 + \SNR^2 |\tilde{h}_2|^4  ] + \Ex  [ \SNR^2 |\tilde{h}_1|^4  ] \big ) \\
= & \log e \, \big ( 1 + \Ex[|\tilde{h}_1|^4] + \Ex [|\tilde{h}_2|^4] \big ) \, \SNR^2  \,
\twocolbreak 
< \, 1.45 \,  \big ( 1 + 2 \Ex[|\tilde{h}_1|^4] \big ) \,  \SNR^2.
\end{align*}

\subsection{Case $2$: $\SNR \geq \frac{1}{2}$ and~$\Ex \big[ \frac{1}{|\tilde{h}|^2} \big] < \infty $}
\begin{align*}
\gap = &  \Ex \big[ \log{( 1 + \SNR |\tilde{h}_1|^2 + \SNR |\tilde{h}_2|^2)} \big] 
\twocolbreak
+ \log \Big( \Ex \big[ \frac{ 1 + \SNR |\tilde{h}_1|^2  }{ 1 + \SNR |\tilde{h}_1|^2 + \SNR |\tilde{h}_2|^2  } \big]
\Ex \big[ \frac{1}{ 1 + \SNR |\tilde{h}_1|^2 } \big] \Big)   \\ 
\stackrel{(a)}{\leq} & \log{ \big( 1 + \SNR \Ex [|\tilde{h}_1|^2] + \SNR \Ex [|\tilde{h}_2|^2] \big )}] 
\twocolbreak
+  \log \Big( \Ex \big[ \frac{ 1 + \SNR |\tilde{h}_1|^2  }{ 1 + \SNR |\tilde{h}_1|^2 + \SNR |\tilde{h}_2|^2 } \big]
\Ex \big[ \frac{1}{ 1 + \SNR |\tilde{h}_1|^2 } \big] \Big)   \\ 
< & \log \Big( ( 1 + 2 \SNR ) \Ex \big[ \frac{1}{ 1 + \SNR |\tilde{h}_1|^2 } \big] \Big)      \\ 
\stackrel{(b)}{\leq} & 2 + \log \big (  \Ex [ \frac{1}{|\tilde{h}_1|^2 + \frac{1}{\SNR}} ] \big )  \, 
< \, 2 + \log \Big (  \Ex \big[ \frac{1}{|\tilde{h}_1|^2} \big] \Big ),
\end{align*}
where~$(a)$ follows from Jensen's inequality and~$(b)$ follows since~$\SNR \geq \frac{1}{2}$.

\subsection{Case $3$: $\SNR \geq \frac{1}{2}$, Nakagami-$m$ fading with $m>1$}
Since the Nakagami-$m$ distribution with~$m>1$ belongs to the class of distributions in Case~$2$, then
\begin{equation*}
\gap < \,  2 + \log \Big (  \Ex \big[ \frac{1}{|\tilde{h}_1|^2} \big] \Big )  \,
\stackrel{(c)}{=} \,  2 + \log \big (  1+ \frac{1}{m-1} \big ),
\end{equation*}
where~$(c)$ follows from the proof of Case~$3$ in Appendix~\ref{appendix:gap_ptp}.

\subsection{Case $4$: $\SNR \geq \frac{1}{2}$, Rayleigh fading}
\begin{equation*}
\gap  \stackrel{(d)}{\leq} \, 2 + \log \big (  \Ex [ \frac{1}{|\tilde{h}_1|^2 + \frac{1}{\SNR}} ] \big )  \, 
\stackrel{(e)}{<} \, 1.48 + \log \big ( \log ( 1 + \SNR ) \big ),
\end{equation*}
where~$(d)$ follows from Case~$2$ and~$(e)$ follows from Case~$4$ in Appendix~\ref{appendix:gap_ptp}.

\bibliographystyle{IEEEtran}
\bibliography{IEEEabrv,References_arxiv}

\end{document}